\theoremstyle{plain}
\newtheorem{definition}{Definition}
\newtheorem{theorem}{Theorem}
\newtheorem{example}{Example}
\newtheorem{claim}{Claim}
\newcommand{\Def}{\coloneqq}
\newcommand{\Id}{\Tilde{\mathbb{I}}}
\newcommand{\Ibb}{\mathbb{I}}
\newcommand{\Sbb}{\mathbb{S}}
\newcommand{\Pbb}{\mathbb{P}}
\newcommand{\Tbb}{\mathbb{T}}
\newcommand{\rmA}{\mathrm{A}}
\newcommand{\rmB}{\mathrm{B}}
\newcommand{\atb}{\rmA\to\rmB}
\newcommand{\rmAB}{\mathrm{AB}}
\newcommand{\rmC}{\mathrm{C}}
\newcommand{\rmD}{\mathrm{D}}
\newcommand{\rmR}{\mathrm{R}}
\newcommand{\rmX}{\mathrm{X}}
\newcommand{\Acal}{\mathcal{A}}
\newcommand{\Ccal}{\mathcal{C}}
\newcommand{\Dcal}{\mathcal{D}}
\newcommand{\Ecal}{\mathcal{E}}
\newcommand{\Hcal}{\mathcal{H}}
\newcommand{\Ical}{\mathcal{I}}
\newcommand{\Mcal}{\mathcal{M}}
\newcommand{\Ncal}{\mathcal{N}}
\newcommand{\Pcal}{\mathcal{P}}
\newcommand{\Rcal}{\mathcal{R}}
\newcommand{\Scal}{\mathcal{S}}
\newcommand{\Tcal}{\mathcal{T}}
\newcommand{\Vcal}{\mathcal{V}}
\newcommand{\Zcal}{\mathcal{Z}}
\begin{document}

\title{Process tensor distinguishability measures}

\author{Guilherme Zambon}
\affiliation{
 Instituto de F{\'i}sica de S{\~a}o Carlos, Universidade de S{\~a}o Paulo, CP 369, 13560-970, S{\~a}o Carlos, SP, Brasil
 }
 \affiliation{
 School of Mathematical Sciences and Centre for the Mathematical and Theoretical Physics of Quantum Non-Equilibrium Systems, University of Nottingham, University Park, Nottingham, NG7 2RD, United Kingdom
 }

\begin{abstract}
  Process tensors are quantum combs describing the evolution of open quantum systems through multiple steps of quantum dynamics. While there is more than one way to measure how different two processes are, special care must be taken to ensure quantifiers obey physically desirable conditions such as data-processing inequalities. Here, we analyze two classes of distinguishability measures commonly used in general applications of quantum combs. We show that the first class, called Choi divergences, does not satisfy an important data-processing inequality, while the second one, which we call generalized divergences, does. We also extend to quantum combs some other relevant results of generalized divergences of quantum channels. Finally, given the properties we proved, we argue that generalized divergences may be more adequate than Choi divergences for distinguishing quantum combs in most of their applications. Particularly, this is crucial for defining monotones for resource theories whose states have a comb structure, such as resource theories of quantum processes and resource theories of quantum strategies.
\end{abstract}

\maketitle

\section{Introduction}

Distinguishability measures play a key role in several applications of quantum information theory. Besides their use in discrimination tasks, they are also employed for defining information and correlation quantifiers, which include a plethora of monotones for quantum resource theories, both static and dynamic \cite{chitambar2019quantum,gonda2023monotones}. While there is no single way to define such measures, one must ensure that the proposed quantities satisfy physically desirable properties, like additivity for independent systems, non-negativity, continuity, and contractivity under noisy operations \cite{yuan2019hypothesis,gour2021entropy}. This last property, also known as data-processing inequality, is especially important when defining monotones for resource theories, as it usually implies that any distinguishability measure with respect to the closest free state will be monotonic under the free operations of the theory \cite{chitambar2019quantum,gonda2023monotones}.

While static and dynamical resource theories have been largely studied, one could also consider manipulations of more general objects. In the resource theory of quantum processes, for example, the objects are process tensors, which consist of quantum combs that operationally describe the evolution of open quantum systems interacting with an experimenter multiple times during the dynamics \cite{chiribella2009theoretical,pollock2018non,Berk2021resourcetheoriesof}. The main advantage of using process tensors to this end is the fact that they offer a proper definition of quantum Markovianity, allowing for the investigation of purely quantum properties of information flow in general quantum processes \cite{pollock2018operational,milz2019completely,taranto2019quantum,taranto2019structure,figueroa2019almost,milz2020kolmogorov,milz2020when,milz2021genuine,taranto2021non,figueroa2021markovianization,sakuldee2022connecting,capela2022quantum,taranto2023hidden,taranto2024characterising,zambon2024relations,santos2024quantifying}. For this reason, process tensors have been used to understand the role of non-Markovianity in several fields of quantum theory that had so far mostly been studied under the Markov assumption, like quantum thermodynamics \cite{strasberg2019repeated,figueroa2020equilibration,huang2022fluctuation,huang2023multiple,dowling2023relaxation,dowling2023equilibration}, quantum process tomography \cite{milz2018reconstructing,white2020demonstration,white2022non,white2022characterization,white2023filtering,aloisio2023sampling}, and quantum simulation \cite{jorgensen2019exploiting,jorgensen2020discrete,cygorek2022simulation,fowler2022efficient,gribben2022using,cygorek2024sublinear,fux2023tensor}, among others \cite{figueroa2021randomized,figueroa2022towards,berk2023extracting,figueroa2024operational,butler2024optimizing}.

When assessing the influence of non-Markovianity on other relevant properties of quantum systems, it is, in general, useful to define non-Markovianity quantifiers. The way this is usually done is through state distinguishability measures between the Choi states of process tensors and the closest Markovian Choi state \cite{pollock2018operational,figueroa2019almost,Berk2021resourcetheoriesof,figueroa2021randomized,berk2023extracting,taranto2024characterising,zambon2024relations}. This approach, which uses what we call Choi divergences of quantum combs, is convenient because it does not require optimizations over inputs but has the drawback of the distinguishability measures not being monotonic under noisy manipulations of the process. This implies, for example, that according to such measures, it is possible to increase the non-Markovianity of the process by means of transformations that do not create correlations between different time steps.

While this drawback does not necessarily compromise the results that employ these measures, it was most likely overlooked in earlier analyses. For example, Ref. \cite{pollock2018operational}, which introduced Choi divergences as non-Markovianity quantifiers, discussed that the distinguishability measure between Choi states must be contractive to lead to consistent quantifiers, but the fact that such non-Markovianity quantifiers would still not be monotonic under local manipulations of the process was not considered for consistency. The data-processing inequality for these measures was discussed in only Refs. \cite{Berk2021resourcetheoriesof,berk2023extracting}, which, based on the contractivity of Choi divergences under superprocesses, concluded that their monotonicity holds.

Here, we begin by stating and identifying a gap in Claim 1, an implicit assumption from Ref. \cite{Berk2021resourcetheoriesof}, and reproduce from Ref. \cite{yuan2019hypothesis} a counterexample to it. Next, we show how this assumption propagates to situations pertaining only to process tensors, giving rise to Claims 2 and 3, both used in Ref. \cite{berk2023extracting}. We also give counterexamples to these claims. Then, we examine an alternative class of distinguishability measures for quantum combs, called generalized comb divergences. We prove some important properties of these divergences, including monotonicity under superprocesses. Finally, we discuss how this property indicates that generalized comb divergences are suitable distinguishability quantifiers for general applications of quantum combs, such as resource theories of quantum processes and quantum strategies \cite{Berk2021resourcetheoriesof,berk2023extracting,wang2019resource}.

\section{Quantum channel divergences}

We begin by analyzing the single-time scenario, described by quantum channels. Let $\Mcal_{\atb}$ be a quantum channel from an input space of density operators $\Dcal(\Hcal_{\rmA})$ to an output space $\Dcal(\Hcal_{\rmB})$. The (normalized) Choi state $\Upsilon_{\rmA\rmB}^{\Mcal}\in\Dcal(\Hcal_{\rmA}\otimes\Hcal_{\rmB})$ of $\Mcal_{\atb}$ is given by
\begin{equation}
\label{eq:choi-def}
    \Upsilon_{\rmA\rmB}^{\Mcal} \Def \qty(\Ical_{\rmA}\otimes \Mcal_{\atb})\Phi_{\rmA\rmA},
\end{equation}
where $\Phi_{\rmA\rmA} =\ket{\Phi}\bra{\Phi}_{\rmA\rmA}$,
\begin{equation}
\label{eq:phi}
    \ket{\Phi}_{\rmA\rmA} = \frac{1}{\sqrt{d}}\sum_{i=1}^d\ket{ii}_{\rmA\rmA}
\end{equation}
is a maximally entangled state in $\Hcal_{\rmA}\otimes\Hcal_{\rmA}$, and $\Ical_{\rmA}$ is the identity channel in $\Dcal(\Hcal_{\rmA})$. Note that $\Mcal_{\atb}$ being trace preserving implies that its Choi state satisfies
\begin{equation}
\label{eq:trace_cond}
    \Upsilon_{\rmA}^{\Mcal} = \Id_{\rmA},
\end{equation}
where $\Upsilon_{\rmA}^{\Mcal}=\tr_\rmB\qty[\Upsilon_{\rmA\rmB}^{\Mcal}]$ and $\Id=\Ibb/d$ is the maximally mixed state. This means that the set $\Ccal(\Hcal_{\rmA}\otimes\Hcal_{\rmB})$ of Choi states forms a strict subset of $\Dcal(\Hcal_{\rmA}\otimes\Hcal_{\rmB})$. Thus, quantum states $\rho\in\Dcal(\Hcal_{\rmA}\otimes\Hcal_{\rmB})$ with $\tr_\rmB\qty[\rho] \ne \Id_{\rmA}$ cannot be Choi states of channels. On the other hand, any quantum state $\Upsilon\in\Dcal(\Hcal_{\rmA}\otimes\Hcal_{\rmB})$ satisfying the above condition may be associated with a channel $\Mcal_{\atb}^{\Upsilon}:\Dcal(\Hcal_{\rmA})\to\Dcal(\Hcal_{\rmB})$ by
\begin{equation}
\label{eq:choi-back}
    \Mcal_{\atb}^{\Upsilon}(\rho_{\rmA}) = d\tr_\rmA\qty[\rho_{\rmA}\Upsilon_{\rmAB}^{T_\rmA}],
\end{equation}
where ${T_\rmA}$ is the partial transpose in the $\ket{i}$ basis of $\Hcal_{\rmA}$.

Now consider the task of determining how different two given quantum channels $\Mcal_{\atb}$ and $\Ncal_{\atb}$ are. One possible way to do this is by means of Choi divergences, which consist of applying state distinguishability measures to the Choi states $\Upsilon_{\rmA\rmB}^{\Mcal}$ and $\Upsilon_{\rmA\rmB}^{\Ncal}$ \cite{gour2021entropy}. These measures may be any generalized state divergence, defined as follows \cite{cooney2016strong,leditzky2018approaches}.
\begin{definition}[Generalized state divergences]
    A generalized state divergence $\bm{\rmD}\qty(\rho||\sigma)$ is a mapping from pairs of states to non-negative real numbers satisfying monotonicity under quantum channels $\bm{\rmD}\qty(\Mcal(\rho)||\Mcal(\sigma))\le\bm{\rmD}\qty(\rho||\sigma)$.
\end{definition}
Examples of generalized state divergences are the trace distance, $||\rho-\sigma||_1=\tr[|\rho-\sigma|]$, and the relative entropy, $S\qty(\rho||\sigma)=\tr[\rho(\log_2 \rho - \log_2\sigma)]$. We are now set to define Choi divergences.
\begin{definition}[Choi divergences of channels]
    A Choi divergence $\bm{\Tilde{\rmC}}\qty(\Mcal||\Ncal)$ between channels $\Mcal_{\atb}$ and $\Ncal_{\atb}$ is given by any generalized divergence between their Choi states,
\begin{equation}
\label{eq:choi-div}
    \bm{\Tilde{\rmC}}\qty(\Mcal||\Ncal) \Def \bm{\rmD}\qty(\Upsilon_{\rmA\rmB}^{\Mcal}||\Upsilon_{\rmA\rmB}^{\Ncal}).
\end{equation}
\end{definition}

As they do not require optimization over input states, Choi divergences are useful for applying properties of the state divergences and deriving relations for channels, like those of Ref. \cite{zambon2024relations}. Moreover, some interesting information quantifiers may be written as particular choices of Choi divergences. An example of this is the input-output correlation $M(\Mcal)$ of a channel 
$\Mcal$ used in Refs. \cite{berk2023extracting,zambon2024relations}, 
\begin{align}
        M(\Mcal)&=I(A:B)_{\Upsilon^{\Mcal}}\\
        &=S\qty(\Upsilon_{\rmA\rmB}^{\Mcal}||\Id_\rmA\otimes \Mcal_{\atb}(\Id_\rmA))\\
        &=\bm{\Tilde{\rmC}}\qty(\Mcal||\Ncal),
    \end{align}
where $\Ncal$ is a channel with $\sigma_\rmB=\Mcal_{\atb}(\Id_\rmA)$ as a fixed output. 

Despite these useful aspects, Choi divergences have the key disadvantage of not satisfying an important data-processing inequality. As discussed in Ref. \cite{Chiribella2008transforming}, the most general transformations from channels to channels are given by superchannels, which can always be implemented by means of preprocessing and postprocessing channels connected by an ancilla. In this sense, one would expect distinguishability measures of quantum channels to be contractive under the action of superchannels.

A common misconception here is to assume that the monotonicity of the state divergence $\bm{\rmD}\qty(\rho||\sigma)$ implies the monotonicity of the Choi divergence through Eq. \eqref{eq:choi-div} \cite{Berk2021resourcetheoriesof}. A possible reasoning to support this is the following.
\begin{claim}
    Consider that any superchannel $\Xi$ over channels $\Mcal_{\atb}$ induces a mapping $\tilde{\Xi}:\Ccal(\Hcal_{\rmA}\otimes\Hcal_{\rmB})\to\Ccal(\Hcal_{\rmA}\otimes\Hcal_{\rmB})$ over the set of Choi states. Since $\tilde{\Xi}$ maps states to states, it is a quantum channel in $\Ccal(\Hcal_{\rmA}\otimes\Hcal_{\rmB})$, and the monotonicity $\bm{\rmD}\qty(\tilde{\Xi}\qty(\Upsilon_{\rmA\rmB}^{\Mcal})||\tilde{\Xi}\qty(\Upsilon_{\rmA\rmB}^{\Ncal}))\le\bm{\rmD}\qty(\Upsilon_{\rmA\rmB}^{\Mcal}||\Upsilon_{\rmA\rmB}^{\Ncal})$ must hold.
\end{claim}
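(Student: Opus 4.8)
The plan is to take the claim's reasoning at face value and try to discharge its single non-trivial step, since once $\tilde{\Xi}$ is shown to be a quantum channel the conclusion is immediate. First I would make $\tilde{\Xi}$ explicit. By the realization theorem for superchannels \cite{Chiribella2008transforming}, any $\Xi$ decomposes into a pre-processing channel and a post-processing channel linked through an ancilla $\rmE$, so that $\Xi[\Mcal_\atb] = \Vcal_{\mathrm{post}} \circ \qty(\Mcal_\atb \otimes \Ical_\rmE) \circ \Vcal_{\mathrm{pre}}$. Translating this composition through the Choi correspondence of Eqs. \eqref{eq:choi-def} and \eqref{eq:choi-back} — equivalently, through the link product of the underlying Choi operators — yields a linear map $\tilde{\Xi}$ on operators over $\Hcal_\rmA \otimes \Hcal_\rmB$ that carries $\Upsilon_{\rmA\rmB}^{\Mcal}$ to $\Upsilon_{\rmA\rmB}^{\Xi[\Mcal]}$ and, by construction, sends the set $\Ccal(\Hcal_\rmA \otimes \Hcal_\rmB)$ into itself.

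Second, I would attempt to verify the two properties the claim silently assumes of $\tilde{\Xi}$ as a map on the full state space $\Dcal(\Hcal_\rmA \otimes \Hcal_\rmB)$: trace preservation and complete positivity. Trace preservation on the relevant subset is benign, since the image of a normalized Choi state is again a normalized Choi state and the constraint of Eq. \eqref{eq:trace_cond} is respected. Third, granting that $\tilde{\Xi}$ is a genuine quantum channel, I would invoke the defining monotonicity of $\bm{\rmD}$ for generalized state divergences, applied to the channel $\tilde{\Xi}$ and the two Choi states $\Upsilon_{\rmA\rmB}^{\Mcal}$ and $\Upsilon_{\rmA\rmB}^{\Ncal}$, to obtain the stated inequality verbatim.

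The hard part — and the point at which I expect the argument to break — is complete positivity of $\tilde{\Xi}$ on the full space. The monotonicity appealed to in the final line holds for CPTP maps on all of $\Dcal(\Hcal_\rmA \otimes \Hcal_\rmB)$, whereas $\tilde{\Xi}$ is only ever constrained on the affine set $\Ccal(\Hcal_\rmA \otimes \Hcal_\rmB)$, which by Eq. \eqref{eq:trace_cond} is a strict, lower-dimensional subset of $\Dcal(\Hcal_\rmA \otimes \Hcal_\rmB)$. Mapping this subset into itself leaves the action of $\tilde{\Xi}$ on directions transverse to the Choi constraint entirely unconstrained, so complete positivity of its linear extension is not forced. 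Worse, the link product realizing $\tilde{\Xi}$ carries a partial transpose on the spaces over which the channel is linked into the comb — the same kind of transpose visible in Eq. \eqref{eq:choi-back} — and partial transposition is positive but not completely positive.

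I therefore expect step two to fail in general: the induced $\tilde{\Xi}$ maps Choi states to Choi states, yet need not be a quantum channel in the sense required for data processing, so the chain of reasoning collapses precisely at the inference ``maps states to states, hence is a quantum channel.'' Rather than proving the claim, then, the honest outcome of this plan is to \emph{refute} it by exhibiting a superchannel $\Xi$ and a divergence $\bm{\rmD}$ for which $\bm{\rmD}\qty(\tilde{\Xi}\qty(\Upsilon_{\rmA\rmB}^{\Mcal})||\tilde{\Xi}\qty(\Upsilon_{\rmA\rmB}^{\Ncal})) > \bm{\rmD}\qty(\Upsilon_{\rmA\rmB}^{\Mcal}||\Upsilon_{\rmA\rmB}^{\Ncal})$, confirming that the gap lies exactly in the unwarranted complete-positivity assumption.
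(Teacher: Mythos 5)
Your overall verdict --- that Claim 1 is false and the only honest outcome is a refutation --- agrees with the paper, which presents this claim precisely as flawed reasoning and then refutes it. But your diagnosis of \emph{where} the reasoning breaks is the opposite of the correct one, and it is mathematically wrong. You declare trace preservation ``benign'' and locate the failure in complete positivity, on the grounds that the link product carries a partial transpose and ``partial transposition is positive but not completely positive.'' This is backwards. The induced map $\tilde{\Xi}$ \emph{is} completely positive: a linear map of the form $X \mapsto \tr_{\cap}\qty[\qty(X^{T_\cap}\otimes \Ibb)J]$ with $J\ge 0$ is exactly the general form of a CP map --- this is the Choi correspondence itself, and Eq. \eqref{eq:choi-back}, which you cite as evidence of trouble, is an instance of it defining a perfectly good channel --- and the relevant operators built from a superchannel's pre- and post-processing are positive. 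What fails is the trace condition: $\tilde{\Xi}$ preserves the trace of operators obeying Eq. \eqref{eq:trace_cond}, i.e.\ on the affine set of Choi states, but its linear extension to all of $\Lcal(\Hcal_\rmA\otimes\Hcal_\rmB)$ is not trace preserving, nor even trace nonincreasing. Since monotonicity of $\bm{\rmD}$ (e.g.\ of the relative entropy, by the M\"uller-Hermes--Reeb theorem the paper cites) requires positivity \emph{and} trace preservation on the full operator space, it is the trace, not complete positivity, that collapses the argument. The paper's own example makes this vivid: $\tilde{\Xi}(\Upsilon_{\rmAB}) = \Id_\rmA\otimes 2 \bra{1}_\rmA\Upsilon_{\rmAB}\ket{1}_\rmA$ is manifestly CP, yet it increases the trace of $Q_{\rmAB}=(2\ket{1}\bra{1}_{\rmA}-\ket{0}\bra{0}_{\rmA})\otimes\Id_{\rmB}$.

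Separately, your proposal stops short of producing the refutation it promises: you ``expect'' a counterexample but exhibit none. The paper's entire treatment of this claim rests on an explicit one (Example 1 and Appendix A): $\Ncal_{\atb}$ completely depolarizing; $\Mcal_{\atb}$ with Kraus operators $M_1=\sqrt{1/2}\ket{0}_\rmB\bra{0}_\rmA$, $M_2=\sqrt{1/2}\ket{1}_\rmB\bra{0}_\rmA$, $M_3=\ket{1}_\rmB\bra{1}_\rmA$; and $\Xi(\Ecal)=\Ecal\circ\Rcal$ with $\Rcal$ preparing $\ket{1}\bra{1}_\rmA$, for which $S\qty(\Upsilon_{\rmA\rmB}^{\Mcal}||\Upsilon_{\rmA\rmB}^{\Ncal})=1/2$ while $S\qty(\tilde{\Xi}\qty(\Upsilon_{\rmA\rmB}^{\Mcal})||\tilde{\Xi}\qty(\Upsilon_{\rmA\rmB}^{\Ncal}))=1$. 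Without such an example your argument establishes nothing: observing that one particular linear extension of $\tilde{\Xi}$ might lack some property does not rule out that a CPTP extension exists, in which case monotonicity on the Choi set would hold after all. Only the explicit violation closes that loophole, and it is missing from your proposal.
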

The problem with this reasoning is that despite $\tilde{\Xi}$ being linear and completely positive, it is not trace preserving or even trace nonincreasing in general. Even though it preserves the trace of states in $\Ccal(\Hcal_{\rmA}\otimes\Hcal_{\rmB})$, there may be linear operators over $\Hcal_{\rmA}\otimes\Hcal_{\rmB}$ for which it can even increase the trace. This implies that $\tilde{\Xi}$ is not a channel and the monotonicity does not hold in general even for divergences between two states in $\Ccal(\Hcal_{\rmA}\otimes\Hcal_{\rmB})$ \cite{muller2019monotonicity}. We now use a slightly modified version of an example from Ref. \cite{yuan2019hypothesis} to show the gap in Claim 1.

\begin{example}
    The detailed calculations for this example are carried out in Appendix \ref{app:ex1}. Consider a qubit channel $\Ncal_{\atb}$ with $\Id_\rmB$ as a fixed output, also called a completely depolarizing channel. Consider also a channel $\Mcal_{\atb}$ with Kraus operators $M_1=\sqrt{1/2}\ket{0}_\rmB\bra{0}_\rmA$, $M_2=\sqrt{1/2}\ket{1}_\rmB\bra{0}_\rmA$, and $M_3=\ket{1}_\rmB\bra{1}_\rmA$. The relative entropy between their Choi states is
    \begin{align}
            S\qty(\Upsilon_{\rmA\rmB}^{\Mcal}||\Upsilon_{\rmA\rmB}^{\Ncal})&= \frac{1}{2}.
        \end{align}
    
    Now consider the superchannel $\Xi\qty(\Ecal)=\Ecal\circ\Rcal$ in which $\Rcal_{\rmA\to\rmA}$ is a channel with $\ket{1}\bra{1}_\rmA$ as a fixed output. The channel $\tilde{\Xi}$ induced by $\Xi$ over the set of Choi states is given by
    \begin{equation}
        \tilde{\Xi}(\Upsilon_{\rmAB})= \Id_\rmA\otimes 2 \bra{1}_\rmA\Upsilon_{\rmAB}\ket{1}_\rmA,
    \end{equation}
    which can be directly verified through Eq. \eqref{eq:choi-back}. This implies $\tr[\tilde{\Xi}(\Upsilon_{\rmAB})]=\bra{1}\Upsilon_\rmA\ket{1}$, which together with Eq. \eqref{eq:trace_cond} ensures $\tilde{\Xi}$ preserves the trace of Choi states. This, however, does not mean that $\tilde{\Xi}$ is trace preserving or even trace nonincreasing in general, as required for the monotonicity of the relative entropy to hold \cite{muller2019monotonicity}. For example, for the linear operator $Q_{\rmAB}\Def(2\ket{1}\bra{1}_{\rmA}-\ket{0}\bra{0}_{\rmA})\otimes\Id_{\rmB}$ we have $\tr[Q_{\rmAB}]=1$ and $\tr[\tilde{\Xi}(Q_{\rmAB})]=2$.
    
    Finally, we notice that the relative entropy between states $\Upsilon_{\rmA\rmB}^{\Mcal}$ and $\Upsilon_{\rmA\rmB}^{\Ncal}$ increases under the action of $\tilde{\Xi}$, even though their traces are preserved, 
    \begin{align}
            S\qty(\tilde{\Xi}\qty(\Upsilon_{\rmA\rmB}^{\Mcal})||\tilde{\Xi}\qty(\Upsilon_{\rmA\rmB}^{\Ncal}))&= 1;
        \end{align}
    therefore, $S\qty(\tilde{\Xi}\qty(\Upsilon_{\rmA\rmB}^{\Mcal})||\tilde{\Xi}\qty(\Upsilon_{\rmA\rmB}^{\Ncal})) > S\qty(\Upsilon_{\rmA\rmB}^{\Mcal}||\Upsilon_{\rmA\rmB}^{\Ncal})$.
\end{example}

Since Choi divergences are not monotonic under superchannels, we also consider a different class of divergences, called generalized channel divergences \cite{leditzky2018approaches}.
\begin{definition}[Generalized channel divergences]
A generalized channel divergence $\bm{\Tilde{\rmD}}\qty(\Mcal||\Ncal)$ between channels $\Mcal_{\atb}$ and $\Ncal_{\atb}$ is given by
\begin{equation}
\label{eq:gen-div}
    \bm{\Tilde{\rmD}}\qty(\Mcal||\Ncal) \Def \sup_{\rho_{\rmR\rmA}}\bm{\rmD}\qty(\Mcal_{\atb}(\rho_{\rmR\rmA})||\Ncal_{\atb}(\rho_{\rmR\rmA})),
\end{equation}
where we have an optimization of a generalized state divergence between the outputs of the channels $\Ical_\rmR\otimes\Mcal_{\atb}$ and $\Ical_\rmR\otimes\Ncal_{\atb}$ given the same input, where $\rmR$ is an auxiliary system with $\dim\{\rmR\}\ge \dim\{\rmA\}$.
\end{definition}

Notice that the maximization in the definition above implies that generalized channel divergences may be hard to compute in general, especially when compared to Choi divergences, which require no optimization. Nevertheless, Refs. \cite{yuan2019hypothesis,gour2021entropy} showed that generalized channel divergences satisfy the most relevant properties expected for distinguishability quantifiers, including monotonicity under superchannels. This means that they are suitable measures for determining how different two given quantum channels are, which is especially relevant in the context of resource theories of quantum channels \cite{takagi2019general,gour2019comparison,gour2019how,liu2019resource,yuan2019hypothesis,yunchao2020operational,hsieh2020resource,hsieh2021communication,regula2021fundamental,stratton2024dynamical}. 

Importantly, the fact that Choi divergences are not monotonic under superchannels does not deem them useless. The Hilbert-Schmidt distance for quantum states, for example, is used in several bounds throughout quantum theory \cite{wilde2013quantum}, even though it is not contractive under quantum channels \cite{wang2009contractivity}. Similarly, Choi divergences may provide upper and lower bounds to generalized divergences, as shown in Remark 14 of Ref. \cite{katariya2021geometric} (see also Ref. \cite{wilde2020stack}).

\section{Process tensor divergences}

We now present the multitime scenario. Here, an experimenter performs control operations on a system of interest, which interacts with an uncontrolled environment between operations. The control operations may consist of an initial-state preparation and a sequence of quantum channels on the system alone, for example. The dynamics is then described by a process tensor $\Tcal$ mapping the control sequence $\Scal$ to the final state of the system \cite{pollock2018non}. Considering that the control operations may be correlated through an ancilla, the most general structure of both $\Tcal$ and $\Scal$ is that of quantum combs \cite{chiribella2009theoretical}, as shown in Fig. \ref{fig:combs}. To formalize this idea, let $\Pbb_n$ be the set of $n$-step quantum combs and $\Sbb_n$ be the set of combs that are mapped to system-ancilla states by the elements of $\Pbb_n$. In our description, process tensors are given by combs in $\Pbb_n$ and control sequences by combs in $\Sbb_n$. In Fig. \ref{fig:combs}, for example, we have $\Tcal\in\Pbb_2$ and $\Scal\in\Sbb_2$.

\begin{figure}[t]
    \centering
    \includegraphics[width=\columnwidth]{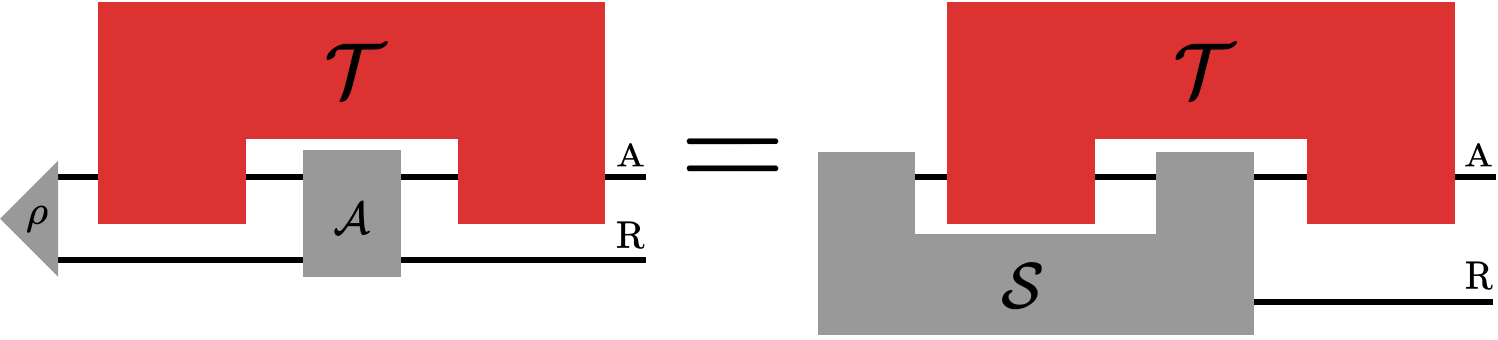}
    \caption{The two-step process tensor $\Tcal$ is a quantum comb that maps the control sequence to the final state of the system. It encapsulates the information of everything that is not controlled by the experimenter, namely, the initial state of the environment and the two interactions between the system and the environment. The most general control sequence for this case is given by a comb $\Scal$ containing the initial state $\rho\in\Dcal(\Hcal_\rmA\otimes\Hcal_\rmR)$ of system $\rmA$ and ancilla $\rmR$ and the operation $\Acal:\Dcal(\Hcal_\rmA\otimes\Hcal_\rmR)\to\Dcal(\Hcal_\rmA\otimes\Hcal_\rmR)$.}
    \label{fig:combs}
\end{figure}

Furthermore, one could consider an alternative situation where the experimenter has no control over the initial state of the system, which could even be correlated with the environment \cite{pollock2018non}. The process tensor associated with this dynamics would then carry information about the initial system-environment state as well as their subsequent interactions, mapping only the control operations to the final state of the system. Although we do not directly approach this second type of description here, all of our discussions apply to it with only small modifications.

Importantly, having a comb structure implies process tensors may also be associated with Choi states. This is done by inputting half a maximally entangled state at each step of the process and keeping the other halves and the outputs. A circuit for implementing this for a two-step process tensor is shown in Fig. \ref{fig:choi-circ}. We can also define combs $\Scal_{\text{Choi}}\in\Sbb_n$ implementing these circuits for $n$-step processes, such that $\Tcal(\Scal_{\text{Choi}})=\Upsilon^{\Tcal}$ for all $\Tcal\in\Pbb_n$. Similar to Eq. \eqref{eq:choi-back} for channels, there is also a way to obtain the action of combs from their Choi states. For combs $\Tcal\in\Pbb_n$ and $\Scal\in\Sbb_n$ with Choi states $\Upsilon^\Tcal$ and $\Upsilon^\Scal$, the action of $\Tcal$ on $\Scal$ is given by the \textit{link product} \cite{chiribella2009theoretical},
    \begin{align}
            \Upsilon^\Tcal\star\Upsilon^\Scal &\Def d_{\cap}\tr_{\cap}[\Upsilon^\Tcal\Upsilon^{\Scal^{T_\cap}}]\\
            &= \Tcal(\Scal),
        \end{align}
where $\cap$ is the intersection between the spaces over which the Choi states are defined, $d_{\cap}$ is its dimension, and $T_\cap$ is the partial transpose in this space.

Moreover, Choi states of $n$-step process tensors must satisfy a set of $n$ conditions, like that of Eq. \eqref{eq:trace_cond} for channels \cite{chiribella2009theoretical}. These conditions reflect not only the fact that the process tensor is trace preserving but also that it is time ordered, in the sense that future inputs cannot affect past outputs. For the Choi state $\Upsilon_{\rmA\rmB\rmC\rmD}^{\Tcal}$ of a two-step process tensor these conditions are
\begin{align}
    \Upsilon_{\rmA\rmB\rmC}^{\Tcal} &= \Upsilon_{\rmA\rmB}^{\Tcal}\otimes \Id_{\rmC}, \label{eq:order1} \\
    \Upsilon_{\rmA}^{\Tcal} &= \Id_{\rmA}. \label{eq:order2}
\end{align}
This means that, like in the channel case, the set $\Ccal(\Hcal_\rmA\otimes\Hcal_\rmB\otimes\Hcal_\rmC\otimes\Hcal_\rmD)$ of Choi states of two-step process tensors forms a strict subset of all states $\Dcal(\Hcal_\rmA\otimes\Hcal_\rmB\otimes\Hcal_\rmC\otimes\Hcal_\rmD)$.

\begin{figure}
    \centering
    \includegraphics[width=\columnwidth]{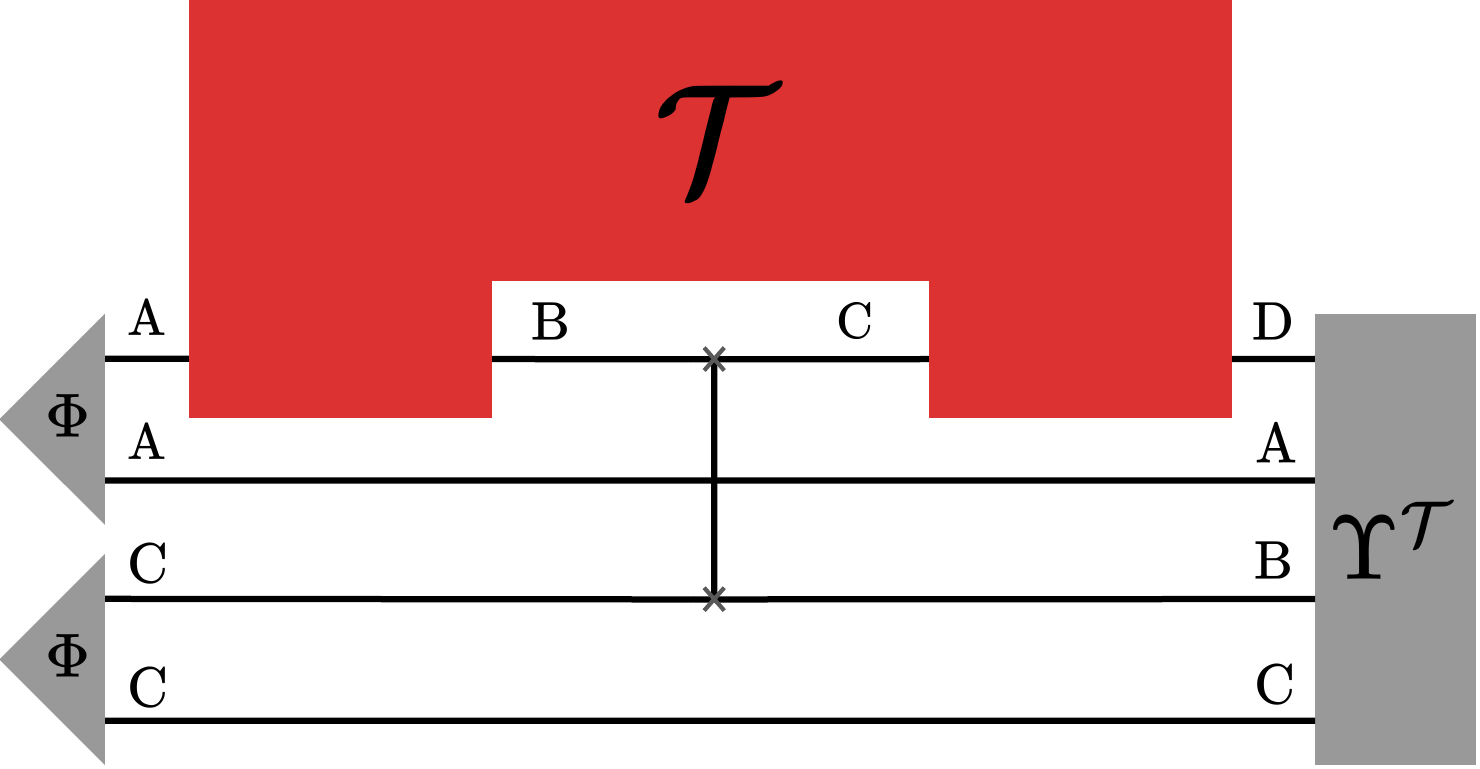}
    \caption{Circuit for obtaining the Choi state $\Upsilon^\Tcal$ of the process tensor $\Tcal$. First, we prepare the maximally entangled state $\Phi_{\rmA\rmA}$, input half of it to the first step of the process, and store the other half. Then, we prepare a second maximally entangled state $\Phi_{\rmC\rmC}$ and use a SWAP operator to input half of it to the second step of the process and store both the other half and the output of the first step. Finally, we store the output of the second step of the process. The four-partite stored state $\Upsilon^\Tcal\in\Dcal(\Hcal_\rmA\otimes\Hcal_\rmB\otimes\Hcal_\rmC\otimes\Hcal_\rmD)$ is the Choi state of the process tensor $\Tcal$.}
    \label{fig:choi-circ}
\end{figure}

Now, we consider the task of distinguishing two process tensors $\Tcal,\Vcal\in\Pbb_n$. We begin with a generalization of Choi divergences to quantum combs. 
\begin{definition}[Choi divergences of combs]
    A Choi divergence $\bm{\bar{\rmC}}\qty(\Tcal||\Vcal)$ between quantum combs $\Tcal,\Vcal\in\Pbb_n$ is given by any generalized state divergence between their Choi states
\begin{equation}
\label{eq:choi-div-pt}
    \bm{\bar{\rmC}}\qty(\Tcal||\Vcal) \Def \bm{\rmD}(\Upsilon^{\Tcal}||\Upsilon^{\Vcal}).
\end{equation}
\end{definition}

Given our discussion of the case of channels and considering that process tensors are multitime generalizations of channels, we would expect Choi divergences of process tensors to have the same problem as those of quantum channels. Indeed, we show next that they are not contractive under the action of superprocesses, the most general mappings from process tensors to process tensors \cite{Berk2021resourcetheoriesof}. Moreover, we discuss the consequences of this non-monotonicity for operations and properties that pertain only to the multitime case, like temporal coarse graining and non-Markovianity.

\begin{figure*}
    \centering
    \includegraphics[width=0.75\textwidth]{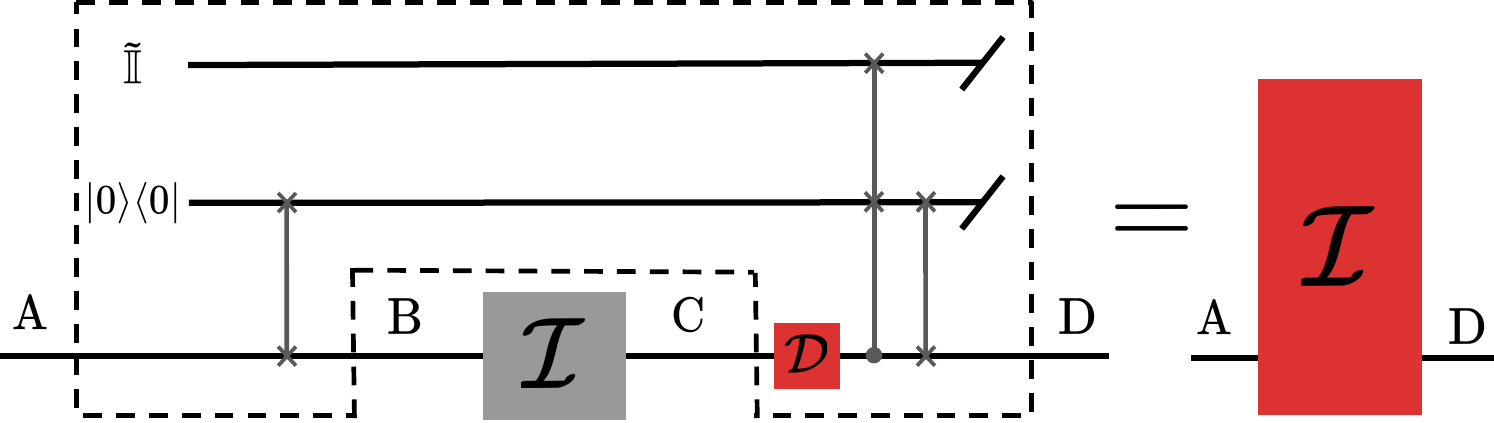}
    \caption{Coarse graining of a two-step process tensor $\Tcal$ resulting in a channel from the first input to the second output. The two-step process may be described as follows. In the first step, a SWAP gate exchanges the first input, some qubit state $\rho$, with the environment initial state $\ket{0}$, such that the first output is always $\ket{0}$ and $\rho$ is stored as the new state of the environment. In the second step, there is first the action of a fully dephasing channel $\Dcal(\cdot)=\bra{0}\cdot\ket{0}~\ket{0}\bra{0}+\bra{1}\cdot\ket{1}~\ket{1}\bra{1}$, then a controlled SWAP followed by another SWAP. This implies that if the second input of the dynamics is $\ket{1}$, the second output will be the maximally mixed state $\Id$. However, if the second input of the dynamics is $\ket{0}$, state $\rho$ is swapped back into the system, such that the second output will be exactly equal to the first input. Coarse graining this process means using the first output as the second input. Since the first output is always $\ket{0}$, the second output of the process will always be $\rho$. Therefore, the resulting channel after coarse graining is an identity channel from the first input to the second output.}
    \label{fig:ex2}
\end{figure*}

\subsection{Choi divergences under temporal coarse graining}

One could generalize to process tensors the idea presented for channels in Claim 1.

\begin{claim}\cite{berk2023extracting}
    Consider that any superprocess $\bm{Z}$ over process tensors $\Tcal$ induces a mapping $\tilde{\bm{Z}}$ over the set of Choi states. Since $\tilde{\bm{Z}}$ maps states to states, it is a quantum channel in the set of Choi states and the monotonicity $\bm{\rmD}\qty(\tilde{\bm{Z}}\qty(\Upsilon^{\Tcal})||\tilde{\bm{Z}}\qty(\Upsilon^{\Vcal}))\le\bm{\rmD}\qty(\Upsilon^{\Tcal}||\Upsilon^{\Vcal})$ must hold.
\end{claim}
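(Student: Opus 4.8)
The plan is to show that Claim~2 is in fact \emph{false}, for precisely the same reason Claim~1 fails. The logical gap is identical: although the induced map $\tilde{\bm{Z}}$ is linear, completely positive, and preserves the trace of every genuine Choi state in $\Ccal$, it need not be trace-nonincreasing on arbitrary Hermitian operators, whereas monotonicity of a generalized state divergence requires exactly such a map. My goal is therefore to exhibit one explicit superprocess together with a pair of process tensors $\Tcal,\Vcal$ for which $\bm{\rmD}\qty(\tilde{\bm{Z}}\qty(\Upsilon^{\Tcal})||\tilde{\bm{Z}}\qty(\Upsilon^{\Vcal})) > \bm{\rmD}\qty(\Upsilon^{\Tcal}||\Upsilon^{\Vcal})$, mirroring Example~1 but driven by an operation with no single-time analogue.

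First I would take the simplest genuinely multitime superprocess: temporal coarse graining of a two-step process into a channel, obtained by wiring the first output into the second input through the link product. This is a legitimate superprocess, so any increase it produces refutes the claim and, moreover, speaks directly to the non-Markovianity applications flagged in the introduction. The induced action $\tilde{\bm{Z}}$ on Choi states is then read off from the link product with the wiring comb, and I would verify---exactly as with the operator $Q_{\rmAB}$ in Example~1---that $\tilde{\bm{Z}}$ strictly increases the trace of some Hermitian operator lying outside $\Ccal$, confirming it is not a channel.

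Second I would engineer the two processes so that coarse graining \emph{amplifies} a difference that is diluted in the full Choi state. The construction of Fig.~\ref{fig:ex2} achieves this: a SWAP stores the first input in the environment, and a fully dephasing channel on the second step renders the second-input register effectively classical, splitting the dynamics into a $\ket{0}$ branch, in which the stored state is swapped back out, and a $\ket{1}$ branch, in which the output is the maximally mixed state $\Id$. Since coarse graining forces the second input to be $\ket{0}$, the process of Fig.~\ref{fig:ex2} collapses to the \emph{identity} channel. I would pair it with a companion process $\Vcal$ agreeing on the $\ket{1}$ branch but differing on the $\ket{0}$ branch, for instance one whose second output is maximally mixed there. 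In the full Choi state the $\ket{0}$-branch discrepancy is averaged against the common $\ket{1}$ branch and so is small, whereas after coarse graining only the $\ket{0}$ branch survives and the two processes reduce to the identity versus a completely depolarizing channel, hence maximally distinguishable. Evaluating the trace distance or the relative entropy on the original and on the coarse-grained Choi states then displays the strict increase.

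The main obstacle is bookkeeping rather than conceptual. One must propagate the time-ordering conditions \eqref{eq:order1}--\eqref{eq:order2} correctly through the link product to certify that $\Tcal$ and $\Vcal$ are valid process tensors and that $\tilde{\bm{Z}}$ still maps Choi states to Choi states while failing to be trace-nonincreasing outside that subset. The delicate quantitative point is arranging the branch weights and the dephasing so that the full-Choi discrepancy is genuinely smaller than the post-coarse-graining one; once these are fixed, the divergence computation is routine and completes the refutation.
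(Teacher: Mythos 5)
Your proposal is correct and takes essentially the same route as the paper: the paper's Example~2 refutes this claim with exactly the coarse-graining superprocess applied to the process of Fig.~\ref{fig:ex2}, paired with the marginal process $\Tcal^{\text{marg}}$, whose Choi state $\Id_\rmA\otimes\ket{0}\bra{0}_\rmB\otimes\Id_\rmC\otimes\Id_\rmD$ coincides with your proposed companion $\Vcal$ (second output maximally mixed on both branches). The relative entropy then indeed jumps from $1$ to $2$ under coarse graining, just as your construction anticipates.
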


From Example 1 we know the above statement cannot be true, as quantum channels are a special case of process tensors. However, we now present a counterexample to Claim 2 with an operation that is possible only in the multitime case.

\begin{example}
    Consider the action of a superprocess $\bm{G}:\Pbb_2\to\Pbb_1$ known as temporal coarse graining, which consists of doing nothing (inputting an identity channel $\Ical$) between two consecutive steps of the process. This is shown in Fig. \ref{fig:ex2}, where a two-step process tensor is mapped to a channel by means of coarse graining.

    Notice that in Fig. \ref{fig:ex2} the two-step process tensor $\Tcal$ has a Choi state $\Upsilon^{\Tcal}=\ket{0}\bra{0}_{\rmB}\otimes(\ket{0}\bra{0}_{\rmC}\otimes\Phi_{\rmA\rmD}+\ket{1}\bra{1}_{\rmC}\otimes\Id_{\rmA}\otimes\Id_{\rmD})/2$. Consider also the marginal process $\Tcal^{\text{marg}}$, whose Choi state is the uncorrelated version of $\Upsilon^{\Tcal}$, that is,
    \begin{align}
            \Upsilon^{\Tcal^{\text{marg}}}&=\Upsilon_\rmA^{\Tcal}\otimes\Upsilon_\rmB^{\Tcal}\otimes\Upsilon_\rmC^{\Tcal}\otimes\Upsilon_\rmD^{\Tcal}\\
            &=\Id_\rmA\otimes\ket{0}\bra{0}_{\rmB}\otimes\Id_\rmC\otimes\Id_\rmD.
        \end{align}
    The Choi divergence given by the relative entropy between $\Upsilon^{\Tcal}$ and $\Upsilon^{\Tcal^{\text{marg}}}$ is
    \begin{equation}
        I(\Tcal)\Def S(\Upsilon^{\Tcal}||\Upsilon^{\Tcal^{\text{marg}}}),
    \end{equation}
    which was defined in Ref. \cite{berk2023extracting} as the total correlations $I$ of the process $\Tcal$. We now show that for the process $\Tcal$ above $I$ is not contractive under the action of the temporal coarse-graining superprocess $\bm{G}$.
    
    First, in Appendix \ref{app:ex2} we obtain $I(\Tcal)= 1$. Note that while $\bm{G}(\Tcal)$ is an identity channel, as discussed in Fig. \ref{fig:ex2}, $\bm{G}(\Tcal)^{\text{marg}}$ is a channel with $\Id_\rmD$ as a fixed output, which implies
    \begin{align}
            I(\bm{G}(\Tcal))&=S(\Upsilon^{\bm{G}(\Tcal)}||\Upsilon^{\bm{G}(\Tcal)^{\text{marg}}})\\
            &= S(\Phi_{\rmA\rmD}||\Id_\rmA\otimes\Id_\rmD)\\
            &= 2.
        \end{align}
    Therefore, we have shown that the total correlations quantifier $I$ is not monotonic under the action of temporal coarse graining. Also, since in this particular case $\bm{G}(\Tcal)^{\text{marg}}=\bm{G}(\Tcal^{\text{marg}})$, we have 
    \begin{equation}
        S(\Upsilon^{\Tcal}||\Upsilon^{\Tcal^{\text{marg}}})<S(\Upsilon^{\bm{G}(\Tcal)}||\Upsilon^{\bm{G}(\Tcal^{\text{marg}})}),
    \end{equation}
    which shows that Choi divergences of process tensors are not contractive under superprocesses in general.
\end{example}

\subsection{Choi divergences as non-Markovianity quantifiers}

One of the key advantages of using process tensors to describe the dynamics of open quantum systems is that it provides an unambiguous definition of quantum Markovianity \cite{pollock2018non,pollock2018operational}. Namely, a process tensor is Markovian if each step of the dynamics consists of a quantum channel uncorrelated with the rest, which is equivalent to its Choi state being of product form. Such a definition allows for a proper quantification of the non-Markovianity of any given quantum process.

However, the non-Markovianity quantifiers employed so far consist of Choi divergences between the given process and the closest Markovian process \cite{pollock2018operational,figueroa2019almost,Berk2021resourcetheoriesof,figueroa2021randomized,berk2023extracting,taranto2024characterising,zambon2024relations}. Besides the geometrical aspect of measuring closeness to Markovian processes, the use of such divergences could be motivated by its monotonicity under superprocesses that do not create correlations between different time steps. These superprocesses, called independent quantum instruments (IQI) in Ref. \cite{berk2023extracting}, have the form of uncorrelated local superchannels acting on each step of the process, as shown in Fig. \ref{fig:mkv}. Extending the arguments of Claims 1 and 2, a possible reasoning to support Choi divergences as non-Markovianity quantifiers could be structured as follows.

\begin{claim}\cite{berk2023extracting}
    Any superprocess $\bm{Z}\in \text{IQI}$ induces a local mapping $\tilde{\bm{Z}}$ over the set of Choi states. Since local processing does not increase global correlations, it follows that 
    \begin{equation}
        \min_{\Vcal\in\Ccal^{\text{prod}}} \bm{\rmD}\qty(\tilde{\bm{Z}}\qty(\Upsilon^{\Tcal})|| \tilde{\bm{Z}}\qty(\Upsilon^{\Vcal})) \le\min_{\Vcal\in\Ccal^{\text{prod}}}\bm{\rmD}\qty(\Upsilon^{\Tcal}|| \Upsilon^{\Vcal}),
    \end{equation}
    where $\Ccal^{\text{prod}}$ is the set of product Choi states.
\end{claim}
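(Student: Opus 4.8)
The plan is to push the reasoning of Claims 1 and 2 one step further, so I would first try to make the two embedded assertions precise. An IQI superprocess $\bm{Z}$ is built from uncorrelated local superchannels acting on each step, so I expect its induced map on Choi states to factorize across the step structure, $\tilde{\bm{Z}}=\tilde{\Xi}_1\otimes\tilde{\Xi}_2\otimes\cdots$, where each $\tilde{\Xi}_k$ is the single-step induced map of Claim 1 acting on the Choi subsystems of step $k$ (for the two-step case, $\tilde{\Xi}_{\rmA\rmB}\otimes\tilde{\Xi}_{\rmC\rmD}$). Because it is a tensor product across steps, this map sends product Choi states to product Choi states, so $\tilde{\bm{Z}}\qty(\Ccal^{\text{prod}})\subseteq\Ccal^{\text{prod}}$, which is the content of ``local processing does not increase global correlations'' at the level of the support.

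With this in hand the argument would mirror the standard proof that relative entropy to the closest product state is monotone under local channels. Let $\Vcal^*\in\Ccal^{\text{prod}}$ attain the minimum on the right, so $\min_{\Vcal}\bm{\rmD}\qty(\Upsilon^{\Tcal}||\Upsilon^{\Vcal})=\bm{\rmD}\qty(\Upsilon^{\Tcal}||\Upsilon^{\Vcal^*})$. Since $\tilde{\bm{Z}}\qty(\Upsilon^{\Vcal^*})$ is again a product Choi state, it is an admissible point for the minimization on the left, giving
\begin{equation}
    \min_{\Vcal\in\Ccal^{\text{prod}}}\bm{\rmD}\qty(\tilde{\bm{Z}}\qty(\Upsilon^{\Tcal})||\Upsilon^{\Vcal})\le\bm{\rmD}\qty(\tilde{\bm{Z}}\qty(\Upsilon^{\Tcal})||\tilde{\bm{Z}}\qty(\Upsilon^{\Vcal^*})).
\end{equation}
The claim would then follow if one could invoke the data-processing inequality $\bm{\rmD}\qty(\tilde{\bm{Z}}\qty(\Upsilon^{\Tcal})||\tilde{\bm{Z}}\qty(\Upsilon^{\Vcal^*}))\le\bm{\rmD}\qty(\Upsilon^{\Tcal}||\Upsilon^{\Vcal^*})$ to close the chain.

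This last step is exactly where I expect the argument to fail, and the obstacle is fatal rather than technical. Each factor $\tilde{\Xi}_k$ is completely positive but, as Example 1 shows, generically neither trace preserving nor trace nonincreasing on operators outside the Choi set, and the tensor product $\tilde{\bm{Z}}$ inherits this defect; hence $\tilde{\bm{Z}}$ is not a quantum channel and the data-processing inequality for $\bm{\rmD}$ simply does not apply. Locality across steps preserves the product form of the support but does nothing to restore trace preservation on the full operator space, so I would not attempt to patch the proof. Instead I would expect the claim to be false and would construct a counterexample in the spirit of Examples 1 and 2: an explicit correlated two-step process $\Tcal$ together with an IQI superprocess whose induced $\tilde{\bm{Z}}$ acts non-trace-preservingly on a single step, chosen so that the relative-entropy distance to the nearest product Choi state strictly increases. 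The reset-type preprocessing of Example 1, whose induced Choi map can inflate the trace of non-Choi operators, applied step-locally is the most promising building block, since it reproduces, one step at a time, the same mechanism that already defeated Claims 1 and 2.
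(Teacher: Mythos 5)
Your proposal is correct and takes essentially the same route as the paper: the paper also treats this Claim as false, pinpointing the same gap you identify (the induced map $\tilde{\bm{Z}}$ is completely positive but not trace preserving or nonincreasing on general operators, so the data-processing inequality cannot be invoked), and its Example 3 is exactly the counterexample you sketch — reset-type preprocessing with fixed output $\ket{0}$ on the second step of the correlated process from Example 2, which raises the relative-entropy distance to the nearest product Choi state from $N(\Tcal)=1$ to $N(\bm{Z}(\Tcal))=2$. You stop short of carrying out the explicit calculation (done in the paper's Appendix C), but your construction and the mechanism behind it coincide with the paper's.
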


Notice that ``local processing does not increase global correlations'' is generally implied by the monotonicity of the correlation measure under the processing \cite{wilde2013quantum}. However, from Example 2 we know this monotonicity does not hold for Choi divergences of process tensors, already putting Claim 3 into question. To show Claim 3 is not valid, we now provide an explicit counterexample to it.

\begin{example}
    Consider the non-Markovianity quantifier $N$, defined as
    \begin{equation}
        N(\Tcal)\Def \min_{\Vcal\in\Ccal^{\text{prod}}}S(\Upsilon^{\Tcal}|| \Upsilon^{\Vcal}).
    \end{equation}
    Let $\bm{Z}\in \text{IQI}$ be the superprocess consisting of preprocessing the second step with a channel that has $\ket{0}$ as a fixed output. 
    
    For the process tensor $\Tcal$ from Fig. \ref{fig:ex2}, in Appendix \ref{app:ex3} we calculate $N(\Tcal)= 1$. Note that $\Upsilon^{\bm{Z}(\Tcal)}=\Phi_{\rmA\rmD}\otimes\Id_{\rmB\rmC}$. Again in Appendix \ref{app:ex3} we show $N(\bm{Z}(\Tcal))= 2$. Therefore, the non-Markovianity quantifier $N$ may increase under the action of the local superprocess $\bm{Z}$.
\end{example}

\section{Generalized comb divergences}

As in the channels case, we turn to generalized divergences to obtain process tensor distinguishability measures that are contractive under superprocesses. To this end, we follow the definitions and results of Ref. \cite{wang2019resource}, where the quantum combs are called quantum strategies.
\begin{definition}[Generalized comb divergences]
A generalized comb divergence $\bm{\bar{\rmD}}\qty(\Tcal||\Vcal)$ between combs $\Tcal,\Vcal\in\Pbb_n$ is given by
\begin{equation}
\label{eq:pt-div}
    \bm{\bar{\rmD}}\qty(\Tcal||\Vcal) \Def \sup_{\Scal\in\Sbb_n}\bm{\rmD}\qty(\Tcal(\Scal)||\Vcal(\Scal)),
\end{equation}
where we have an optimization of a generalized state divergence between the outputs of the combs $\Tcal$ and $\Vcal$ given the same input.
\end{definition}
These divergences were shown to be contractive under the action of combs in Ref. \cite{wang2019resource}. We now generalize this result to contractivity under superprocesses.

\begin{theorem}
    Let $\bm{Z}$ be a general superprocess acting on process tensors $\Tcal,\Vcal\in\Pbb_n$. Then,
    \begin{equation}
    \label{eq:pt-monot}
        \bm{\bar{\rmD}}\qty(\bm{Z}(\Tcal)||\bm{Z}(\Vcal)) \le \bm{\bar{\rmD}}\qty(\Tcal||\Vcal).
\end{equation}
\end{theorem}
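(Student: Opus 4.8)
The plan is to reduce contractivity under superprocesses to the defining monotonicity of the underlying state divergence $\bm{\rmD}$, exactly as in the channel case, where superchannel contractivity of the generalized channel divergence follows by absorbing the superchannel into the input state \cite{yuan2019hypothesis,gour2021entropy}. The central observation is that probing the transformed comb $\bm{Z}(\Tcal)$ with a control sequence is operationally the same as probing the original comb $\Tcal$ with an effective control sequence into which the action of $\bm{Z}$ has been absorbed.

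Concretely, I would first recall that a superprocess admits a comb (higher-order) realization: since $\bm{Z}$ maps $n$-step process tensors to, say, $m$-step process tensors, there is a fixed super-comb with Choi operator $\Upsilon^{\bm{Z}}$ such that $\Upsilon^{\bm{Z}(\Tcal)}=\Upsilon^{\bm{Z}}\star\Upsilon^{\Tcal}$ for every $\Tcal\in\Pbb_n$. Then, for any control sequence $\Scal'\in\Sbb_m$ for the output process, the output state is
\[
    \bm{Z}(\Tcal)(\Scal')=\Upsilon^{\bm{Z}(\Tcal)}\star\Upsilon^{\Scal'}=\qty(\Upsilon^{\bm{Z}}\star\Upsilon^{\Tcal})\star\Upsilon^{\Scal'}.
\]

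Invoking associativity and commutativity of the link product, I would regroup this as $\Upsilon^{\Tcal}\star\qty(\Upsilon^{\bm{Z}}\star\Upsilon^{\Scal'})=\Upsilon^{\Tcal}\star\Upsilon^{\Scal}=\Tcal(\Scal)$, where $\Upsilon^{\Scal}\Def\Upsilon^{\bm{Z}}\star\Upsilon^{\Scal'}$. The key lemma is that $\Scal\in\Sbb_n$, i.e., that this effective object is a genuine control sequence for the original $n$-step process; this holds because $\Sbb_n$ is precisely the set of combs dual to $\Pbb_n$, and linking the super-comb $\Upsilon^{\bm{Z}}$ (which by construction sends $\Pbb_n$ to $\Pbb_m$) with a valid $\Sbb_m$ sequence yields an object that correctly normalizes against every element of $\Pbb_n$. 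Crucially, $\Scal$ depends only on $\bm{Z}$ and $\Scal'$, not on which comb is inserted, so the same effective sequence serves both $\Tcal$ and $\Vcal$: $\bm{Z}(\Tcal)(\Scal')=\Tcal(\Scal)$ and $\bm{Z}(\Vcal)(\Scal')=\Vcal(\Scal)$.

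With this identity, the theorem follows by the standard supremum argument. For every $\Scal'\in\Sbb_m$ we have $\bm{\rmD}\qty(\bm{Z}(\Tcal)(\Scal')||\bm{Z}(\Vcal)(\Scal'))=\bm{\rmD}\qty(\Tcal(\Scal)||\Vcal(\Scal))\le\bm{\bar{\rmD}}\qty(\Tcal||\Vcal)$, since $\Scal\in\Sbb_n$ is one admissible choice in the supremum defining $\bm{\bar{\rmD}}\qty(\Tcal||\Vcal)$. Taking the supremum over $\Scal'$ on the left-hand side then gives $\bm{\bar{\rmD}}\qty(\bm{Z}(\Tcal)||\bm{Z}(\Vcal))\le\bm{\bar{\rmD}}\qty(\Tcal||\Vcal)$. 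I expect the main obstacle to be the lemma that the effective control sequence $\Scal$ genuinely lies in $\Sbb_n$: this is where the comb normalization and time-ordering (causality) conditions must be verified, and it is exactly the step at which the generalized-divergence argument succeeds while the naive Choi-divergence argument of Claim 2 breaks down, since there the induced map on Choi states need not be trace nonincreasing.
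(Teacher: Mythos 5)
Your proposal is correct and takes essentially the same approach as the paper: your effective control sequence $\Scal \Def \Upsilon^{\bm{Z}}\star\Upsilon^{\Scal'}$ is exactly the dual superprocess action $\bm{Z}^\dagger(\Scal')$ that the paper invokes from Ref.~\cite{berk2023extracting} to write $[\bm{Z}(\Tcal)](\Scal')=\Tcal(\bm{Z}^\dagger(\Scal'))$, and your key lemma that $\Scal\in\Sbb_n$ is precisely the paper's observation that $\bm{Z}^\dagger[\Sbb_m]\subseteq\Sbb_n$. The concluding supremum-over-a-subset argument is identical in both proofs.
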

\begin{proof}
    In the most general case we have a superprocess $\bm{Z}:\Pbb_n\to\Pbb_m$. Reference \cite{berk2023extracting} showed that superprocesses have a dual action for contractions of compatible combs. This means that for any superprocess $\bm{Z}$ there is a dual superprocess $\bm{Z}^\dagger:\Sbb_m\to\Sbb_n$  such that $[\bm{Z}(\Tcal)](\Scal)=\Tcal(\bm{Z}^\dagger(\Scal))$ for all $\Scal\in\Sbb_m$. Let $\bm{Z}^\dagger[\Sbb_m]$ be the image of the set $\Sbb_m$ under the action of $\bm{Z}^\dagger$. Since $\bm{Z}^\dagger$ maps combs in $\Sbb_m$ to combs in $\Sbb_n$, we know that $\bm{Z}^\dagger[\Sbb_m]\subseteq\Sbb_n$. This implies
    \begin{align}
            \bm{\bar{\rmD}}\qty(\bm{Z}(\Tcal)||\bm{Z}(\Vcal)) &= \sup_{\Scal\in\Sbb_m}\bm{\rmD}\qty([\bm{Z}(\Tcal)](\Scal)||[\bm{Z}(\Vcal)](\Scal))\\
            &= \sup_{\Scal\in\Sbb_m}\bm{\rmD}\qty(\Tcal(\bm{Z}^\dagger(\Scal))||\Vcal(\bm{Z}^\dagger(\Scal)))\\
            &= \sup_{\Rcal\in\bm{Z}^\dagger[\Sbb_m]}\bm{\rmD}\qty(\Tcal(\Rcal)||\Vcal(\Rcal))\\
            &\le \sup_{\Rcal\in\Sbb_n}\bm{\rmD}\qty(\Tcal(\Rcal)||\Vcal(\Rcal))\\
            &= \bm{\bar{\rmD}}\qty(\Tcal||\Vcal).
        \end{align}
\end{proof}

\begin{figure}[t]
    \centering
    \includegraphics[width=\columnwidth]{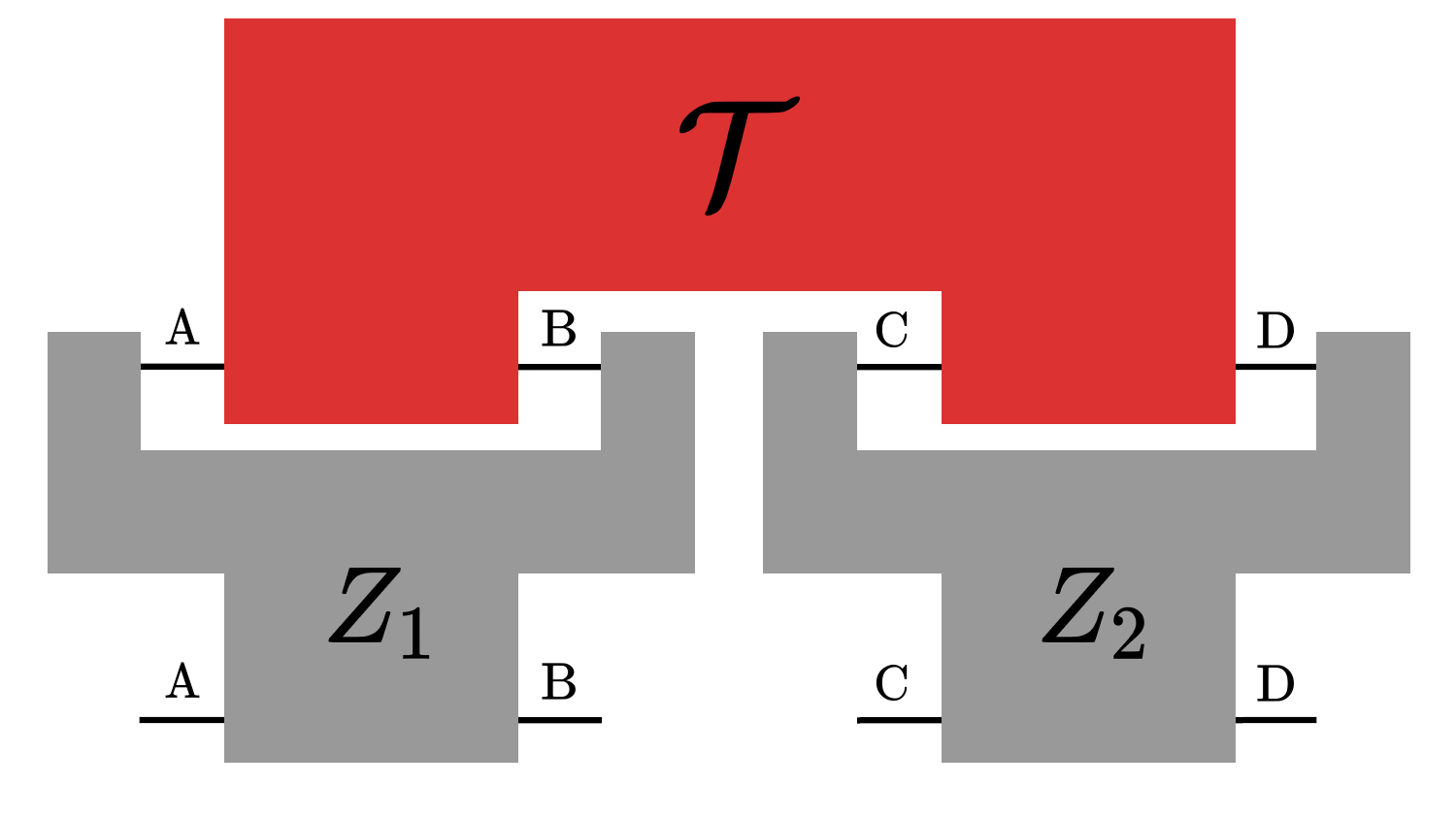}
    \caption{Action of a superprocess $\bm{Z}=\bm{Z}_1\otimes\bm{Z}_2$ on a process tensor $\Tcal$. $\bm{Z}_1$ and $\bm{Z}_2$ are uncorrelated superchannels, which implies $\bm{Z}\in\text{IQI}$. This kind of superprocess does not create correlations between different time steps; therefore, if $\Tcal$ is Markovian, then $\bm{Z}(\Tcal)$ will definitely also be.}
    \label{fig:mkv}
\end{figure}

Noteworthily, similar divergences have been used for process tensors \cite{figueroa2020equilibration,dowling2023relaxation,dowling2023equilibration,taranto2021non}, but they all consider a slightly different class of divergences, which we call classical generalized comb divergences. Before introducing them, we define classical generalized divergences.
\begin{definition}[Classical generalized divergences]
    A classical generalized divergence $D_C\qty(\bm{p}||\bm{q})$ is a mapping from pairs of probability distributions to non-negative real numbers satisfying monotonicity under noisy classical channels $D_C\qty(\Ncal(\bm{p})||\Ncal(\bm{q}))\le D_C\qty(\bm{p}||\bm{q})$.
\end{definition}
Examples of classical generalized divergences are the Kullback-Leibler divergence and the trace distance between probability distributions \cite{wilde2013quantum}. Now, we also consider the set of all objects $\Pcal\in\Tbb_n$ mapping combs in $\Pbb_n$ to probability distributions. Such objects are called \textit{testers} \cite{chiribella2009theoretical}. We then define classical generalized comb divergences.
\begin{definition}[Classical generalized comb divergences]
    A classical generalized comb divergence $\bm{\bar{\rmD}}_C\qty(\Tcal||\Vcal)$ between combs $\Tcal,\Vcal\in\Pbb_n$ is given by
    \begin{equation}
        \label{eq:comb-div}
        \bm{\bar{\rmD}}_C\qty(\Tcal||\Vcal) \Def \sup_{\Pcal\in\Tbb_n}D_C\qty(\Pcal(\Tcal)||\Pcal(\Vcal)),
    \end{equation}
    where we have an optimization of a classical generalized divergence between the output probability distributions $\bm{t}=\Pcal(\Tcal)$ and $\bm{v}=\Pcal(\Vcal)$ given the same tester.
\end{definition}
Using the trace distance as the classical divergence, we obtain \textit{distance between quantum combs} defined in Ref. \cite{chiribella2009theoretical} or \textit{strategy r-norms} from Ref. \cite{gutoski2012measure}.

Reference \cite{chiribella2009theoretical} also showed that the action of a tester $\Pcal\in\Tbb_n$ on a comb $\Tcal\in\Pbb_n$ can always be realized by some positive operator-valued measure (POVM) $P$ performed on state $\Tcal(\Scal_{\Pcal})$ for some $\Scal_{\Pcal}\in\Sbb_n$. We now show how this implies the contractivity of classical generalized comb divergences under superprocesses.
\begin{theorem}
    Let $\bm{Z}$ be a general superprocess acting on process tensors $\Tcal,\Vcal\in\Pbb_n$. Then,
    \begin{equation}
    \label{eq:classic-pt-monot}
        \bm{\bar{\rmD}}_C\qty(\bm{Z}(\Tcal)||\bm{Z}(\Vcal)) \le \bm{\bar{\rmD}}_C\qty(\Tcal||\Vcal).
\end{equation}
\end{theorem}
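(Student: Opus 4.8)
The plan is to reduce Theorem 2 to exactly the dual-action argument already used for Theorem 1, now combined with the tester decomposition recalled immediately before the statement. The crucial observation is that every tester in $\Tbb_m$ factors through a control sequence in $\Sbb_m$ followed by a POVM, while the dual superprocess $\bm{Z}^\dagger$ (which exists by the result of Ref.~\cite{berk2023extracting} invoked in Theorem 1) pulls such control sequences back into $\Sbb_n$. Composing the pulled-back control sequence with the \emph{same} POVM then yields a legitimate tester in $\Tbb_n$ producing identical output distributions, so that the supremum over $\Tbb_m$ is dominated by the supremum over $\Tbb_n$.

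Step by step: first I would fix an arbitrary tester $\Pcal\in\Tbb_m$ and invoke the Chiribella decomposition to write $\Pcal(\bm{Z}(\Tcal))=P[\bm{Z}(\Tcal)(\Scal_{\Pcal})]$ for some $\Scal_{\Pcal}\in\Sbb_m$ and POVM $P$, and identically $\Pcal(\bm{Z}(\Vcal))=P[\bm{Z}(\Vcal)(\Scal_{\Pcal})]$ with the same $\Scal_{\Pcal}$ and $P$. Next I would apply the dual relation $[\bm{Z}(\Tcal)](\Scal_{\Pcal})=\Tcal(\bm{Z}^\dagger(\Scal_{\Pcal}))$ from the proof of Theorem 1 and set $\Rcal\Def\bm{Z}^\dagger(\Scal_{\Pcal})\in\Sbb_n$. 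Then I would define the tester $\mathcal{Q}\in\Tbb_n$ realized by the control sequence $\Rcal$ followed by the POVM $P$, so that $\mathcal{Q}(\Tcal)=P[\Tcal(\Rcal)]=\Pcal(\bm{Z}(\Tcal))$ and likewise $\mathcal{Q}(\Vcal)=\Pcal(\bm{Z}(\Vcal))$. Finally, since $D_C(\Pcal(\bm{Z}(\Tcal))||\Pcal(\bm{Z}(\Vcal)))=D_C(\mathcal{Q}(\Tcal)||\mathcal{Q}(\Vcal))$ for a tester $\mathcal{Q}\in\Tbb_n$, taking the supremum over $\Pcal\in\Tbb_m$ on the left is bounded above by the supremum over all testers in $\Tbb_n$, which is precisely $\bm{\bar{\rmD}}_C(\Tcal||\Vcal)$.

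The hard part will be the well-definedness check that $\mathcal{Q}$ is genuinely a member of $\Tbb_n$: Chiribella's result supplies the forward decomposition of any tester into a control-sequence-plus-POVM, but here I need the converse closure property, namely that an arbitrary control sequence in $\Sbb_n$ post-composed with an arbitrary POVM is always an admissible tester. This should follow from the characterization of testers as exactly those objects of this form, yet it is the one point requiring care, since I must confirm that the dualized control sequence $\bm{Z}^\dagger(\Scal_{\Pcal})$ and the inherited POVM combine without violating the normalization or causal-ordering constraints defining $\Tbb_n$. Notably, because the same $\Scal_{\Pcal}$ and $P$ serve both $\Tcal$ and $\Vcal$, no separate optimization difficulty arises, and the argument collapses cleanly to the set-inclusion bound, exactly as in Theorem 1.
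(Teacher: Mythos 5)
Your proposal is correct and follows essentially the same route as the paper: decompose the tester $\Pcal\in\Tbb_m$ into a comb $\Scal_{\Pcal}\in\Sbb_m$ plus POVM $P$, use the dual superprocess to pull $\Scal_{\Pcal}$ back to $\bm{Z}^\dagger(\Scal_{\Pcal})\in\Sbb_n$, reassemble with the same POVM into a tester in $\Tbb_n$ (your $\mathcal{Q}$ is the paper's $\Pcal'$), and conclude by supremum domination. The well-definedness point you flag is handled in the paper by the same implicit appeal to the characterization of testers as exactly the comb-plus-POVM objects, and the POVM indeed acts on the same output space since $[\bm{Z}(\Tcal)](\Scal_{\Pcal})=\Tcal(\bm{Z}^\dagger(\Scal_{\Pcal}))$ as states.
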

\begin{proof}
For every tester $\Pcal\in\Tbb_m$ realizable through a comb $\Scal_{\Pcal}\in\Sbb_m$ and a POVM $P=\{P_x\}$, let $\tilde{P}$ be the mapping from the final state $\rho$ to the probability distribution $\bm{p}$ obtained from measuring $P$, i.e., $p(x)=\tr[P_x\rho]$, such that $\Pcal(\Tcal)=\tilde{P}(\Tcal(\Scal_{\Pcal}))$. Then, for every superprocess $\bm{Z}:\Pbb_n\to\Pbb_m$ we have
\begin{align}
        \Pcal(\bm{Z}(\Tcal))&= \tilde{P}(\Tcal(\bm{Z}^\dagger(\Scal_{\Pcal})))\\
        &=\Pcal^\prime(\Tcal),
    \end{align}
where $\Pcal^\prime\in\Tbb_n$ is a tester realizable through the comb $\bm{Z}^\dagger(\Scal_{\Pcal})\in\Sbb_n$ and the same POVM $P$. This implies
    \begin{align}
            \bm{\bar{\rmD}}_C\qty(\bm{Z}(\Tcal)||\bm{Z}(\Vcal)) &= \sup_{\Pcal\in\Tbb_m}D_C\qty(\Pcal(\bm{Z}(\Tcal))||\Pcal(\bm{Z}(\Vcal)))\\
            &= \sup_{\Pcal^\prime\in\bm{Z}^\dagger[\Sbb_m]}D_C\qty(\Pcal^\prime(\Tcal)||\Pcal^\prime(\Vcal))\\
            &\le \sup_{\Pcal^\prime\in\Sbb_n}D_C\qty(\Pcal^\prime(\Tcal)||\Pcal^\prime(\Vcal))\\
            &= \bm{\bar{\rmD}}_C\qty(\Tcal||\Vcal).
        \end{align}
\end{proof}

Like in the channel case, Choi divergences of quantum combs
are not useless quantifiers for not being contractive. The fact that they are easily computable without requiring optimization over inputs makes them helpful tools in understanding relevant properties of quantum processes \cite{berk2023extracting,zambon2024relations}. Also, since there is always a comb $\Scal_{\text{Choi}}\in\Sbb_n$ such that $\Tcal(\Scal_{\text{Choi}})=\Upsilon^{\Tcal}$ for all $\Tcal\in\Pbb_n$, it is immediately clear that $\bm{\bar{\rmC}}(\Tcal||\Vcal)\le \bm{\bar{\rmD}}(\Tcal||\Vcal)$. Moreover, we now generalize the result of Ref. \cite{wilde2020stack} to the multitime scenario, showing that Choi divergences of quantum combs may also be used to establish an upper bound to generalized comb divergences.

\begin{theorem}
    Let $\bm{\rmD} $ be a generalized state divergence satisfying the direct-sum property,
    \begin{equation}
        \bm{\rmD} (\rho_{\rmX\rmA}^1||\rho_{\rmX\rmA}^2)= \sum_x p_x \bm{\rmD} (\rho_{\rmA}^1||\rho_{\rmA}^2)
    \end{equation}
    for $\rho_{\rmX\rmA}^i=\sum_x p_x \ket{x}\bra{x}_{\rmX}\otimes\rho_{\rmA}^i$. Then, for combs $\Tcal,\Vcal\in\Pbb_n$ with input dimension $d$ it holds that
    \begin{equation}
        \bm{\bar{\rmD}}(\Tcal||\Vcal)\le d^{2n-1} \bm{\bar{\rmC}}(\Tcal||\Vcal),
    \end{equation}
    where both the generalized comb divergence $\bm{\bar{\rmD}}$ and the Choi divergence of combs $\bm{\bar{\rmC}}$ are defined using the same state divergence $\bm{\rmD}$.
\end{theorem}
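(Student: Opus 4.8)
The plan is to follow the single-time strategy behind the channel bound of Refs.~\cite{wilde2020stack,katariya2021geometric}: realize the action of an arbitrary control sequence on the Choi state as one branch of a genuine quantum instrument, and then invoke the direct-sum property to pay only the price of that branch's probability, which will turn out to be exactly $d^{-(2n-1)}$.

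First I would reduce the supremum in $\bm{\bar{\rmD}}(\Tcal||\Vcal)$ to pure (extremal) control sequences $\Scal\in\Sbb_n$, so that the Choi state $\Upsilon^{\Scal}$ is a rank-one operator $\ket s\bra s$ on the $2n-1$ systems that $\Tcal$ and $\Scal$ share (the $n$ input references and the $n-1$ intermediate outputs, of total dimension $d_{\cap}=d^{2n-1}$) together with the free ancilla $F$ carrying the output. This mirrors the restriction of the channel divergence to pure inputs $\ket\psi_{\rmR\rmA}$. Writing the link product $\Tcal(\Scal)=\Upsilon^{\Tcal}\star\Upsilon^{\Scal}=d^{2n-1}\tr_{\cap}[\Upsilon^{\Tcal}(\ket s\bra s)^{T_\cap}]$ and using the operator--vector correspondence, I would reshape $\ket s$ into an operator $\hat S$ from the contracted space to $F$ and verify the identity $\tr_{\cap}[\Upsilon^{\Tcal}(\ket s\bra s)^{T_\cap}]=(\hat S\otimes\Ibb)\,\Upsilon^{\Tcal}\,(\hat S^\dagger\otimes\Ibb)$, where $\Ibb$ acts on the free output. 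The point of this rewriting is that its right-hand side is manifestly completely positive (a conjugation), sidestepping the fact that the bare link product need not be a channel.

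Next I would assemble the instrument. Setting $E\Def\hat S\otimes\Ibb$ one checks $E^\dagger E=(\Upsilon^{\Scal}_{\cap})^{T}\otimes\Ibb$, so the key step is to show $\Upsilon^{\Scal}_{\cap}\le\Ibb$, i.e.\ that the marginal of the control sequence's Choi state on the contracted systems has operator norm at most one. This is exactly where the defining (causality) conditions of $\Sbb_n$ enter, and it is where I expect to do the real work; for channels it reduces to the elementary statement that the reduced state of a normalized pure state is subnormalized. Granting it, $\{E,E_\bot\}$ with $E_\bot^\dagger E_\bot=\Ibb-E^\dagger E$ defines a two-outcome instrument $\Lcal$, recording the outcome in a classical register $\rmX$, whose success branch is $d^{-(2n-1)}\Tcal(\Scal)$. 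Because $\Tcal(\Scal)$ is a normalized state for every process tensor, the success probability equals $d^{-(2n-1)}$ independently of whether we feed $\Upsilon^{\Tcal}$ or $\Upsilon^{\Vcal}$; hence $\Lcal(\Upsilon^{\Tcal})$ and $\Lcal(\Upsilon^{\Vcal})$ are classical-quantum states with identical weights on $\rmX$, as required for the direct-sum property.

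Finally I would combine monotonicity, the direct-sum property, and non-negativity. Monotonicity under $\Lcal$ gives $\bm{\rmD}(\Lcal(\Upsilon^{\Tcal})||\Lcal(\Upsilon^{\Vcal}))\le\bm{\rmD}(\Upsilon^{\Tcal}||\Upsilon^{\Vcal})=\bm{\bar{\rmC}}(\Tcal||\Vcal)$, while the direct-sum property evaluates the left-hand side as $d^{-(2n-1)}\bm{\rmD}(\Tcal(\Scal)||\Vcal(\Scal))$ plus the non-negative failure term, which I discard. Rearranging yields $\bm{\rmD}(\Tcal(\Scal)||\Vcal(\Scal))\le d^{2n-1}\bm{\bar{\rmC}}(\Tcal||\Vcal)$ for every pure $\Scal$, and taking the supremum over $\Scal$ gives the claim. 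The main obstacle is the norm bound $\Upsilon^{\Scal}_{\cap}\le\Ibb$ (with the reduction to pure control sequences feeding into it); once the comb normalization is pinned down so that the contracted marginal is subnormalized, the remainder is the same bookkeeping as in the channel case, and the exponent $2n-1$ is precisely the dimension $d_{\cap}$ of the contracted space.
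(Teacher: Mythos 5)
Your proof is correct, and its skeleton coincides with the paper's: for each fixed $\Scal$, the contraction of the Choi state with $\Scal$ is realized as the success branch, of probability exactly $d^{-(2n-1)}$, of a two-outcome instrument acting only on the non-output registers, with the outcome stored in a classical flag; monotonicity under this instrument, the direct-sum property, and dropping the non-negative failure term then give $\bm{\rmD}\qty(\Tcal(\Scal)||\Vcal(\Scal))\le d^{2n-1}\,\bm{\bar{\rmC}}\qty(\Tcal||\Vcal)$, and the supremum over $\Scal$ concludes. The implementations differ in two ways that cut in opposite directions. First, the paper needs no purity assumption: it builds the full channel from link products as $\Zcal(\omega)=\frac{1}{d^{2n-1}}\ket{0}\bra{0}\otimes\Upsilon^{\Scal}\star\omega+\ket{1}\bra{1}\otimes\qty(\Id_{\rmR\rmR}-\frac{1}{d^{2n-1}}\Upsilon^{\Scal})\star\omega$, so your reduction to pure $\Scal$ is an extra step you must justify (it does work: dilate the comb to a pure preparation and isometries, pushing all purifying systems into the final ancilla output, which keeps it inside $\Sbb_n$, and invoke monotonicity under the partial trace). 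Second, and in your favor, your completion $E_\bot\Def\sqrt{\Ibb-E^\dagger E}$ is completely positive by construction, whereas the paper's failure branch is completely positive only if $\Id_{\rmR\rmR}-\frac{1}{d^{2n-1}}\Upsilon^{\Scal}\ge 0$; with normalized Choi states this condition fails whenever $\Upsilon^{\Scal}$ has an eigenvalue exceeding $d^{-(2n-1)}$ (e.g., any pure control sequence, already for $n=1$), so your Kraus-pair completion is the more careful rendering of the shared idea, and it is all one needs, since the failure state plays no role in the final bound. Lastly, the step you single out as ``the real work'' is in fact immediate and uses no causality condition of $\Sbb_n$: $\Upsilon^{\Scal}_{\cap}$ is the marginal of a normalized density operator, hence positive with unit trace, hence $\Upsilon^{\Scal}_{\cap}\le\Ibb$, exactly as in the channel case.
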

\begin{proof}
    Consider that for combs $\Tcal\in\Pbb_n$ and $\Scal\in\Sbb_n$ we have $\Tcal(\Scal)\in\Dcal(\Hcal_\rmA\otimes\Hcal_\rmR)$, where $\rmA$ is the output space of $\Tcal$ and $\rmR$ is the output space of $\Scal$ (see Fig. \ref{fig:combs}). Then, we define the Choi states $\Upsilon_{\rmR\rmA}^\Tcal\in\Dcal(\Hcal_\rmR\otimes\Hcal_\rmA)$ and $\Upsilon_{\rmR\rmR}^\Scal\in\Dcal(\Hcal_\rmR\otimes\Hcal_\rmR)$. Although $\Hcal_\rmR$ and $\Hcal_\rmA$ are not the usual input and output spaces over which we define Choi states, they are isomorphic to them (for $d\Def\dim\{\rmA\}$, we have $d_\rmR\Def\dim\{\rmR\}\le d^{2n-1}$) \cite{chiribella2009theoretical}. Now, define a channel $\Zcal:\Dcal(\Hcal_\rmR)\to\Dcal(\Hcal_\rmX\otimes\Hcal_\rmR)$ as
    \begin{align}
            \Zcal(\omega_\rmR) &\Def \frac{1}{d^{2n-1}}\ket{0}\bra{0}_\rmX\otimes \Upsilon_{\rmR\rmR}^\Scal\star\omega_\rmR\\
            &+ \ket{1}\bra{1}_\rmX\otimes \qty(\Id_{\rmR\rmR}-\frac{1}{d^{2n-1}}\Upsilon_{\rmR\rmR}^\Scal)\star\omega_\rmR.
        \end{align}
    The trace preservability of the channel is immediate from the definition, and its linearity follows from the linearity of the link product. To ensure it is completely positive, we use the fact that the link product of positive operators is positive \cite{chiribella2009theoretical} and that $\Id_{\rmR\rmR}-\Upsilon_{\rmR\rmR}^\Scal/d^{2n-1}$ is always positive (this is, in fact, the only reason to add the factor $1/d^{2n-1}$ to the definition).
    
    By construction, it follows that
    \begin{align}
            \Zcal(\Upsilon_{\rmR\rmA}^\Tcal) &= \frac{1}{d^{2n-1}}\ket{0}\bra{0}_\rmX\otimes \Tcal(\Scal)\\
            &+\qty(1-\frac{1}{d^{2n-1}})\ket{1}\bra{1}_\rmX\otimes \eta_{\rmR\rmA},
        \end{align}
    where $\eta_{\rmR\rmA}$ is some irrelevant state. The above equation means that by acting only on $\rmR$ the channel $\Zcal$ steers the Choi state $\Upsilon_{\rmR\rmA}^\Tcal$ of the comb $\Tcal$ to the state $\Tcal(\Scal)$ with probability $1/d^{2n-1}$.

    Now we use the contractivity of the generalized state divergence under the channel $\Zcal$ and the direct-sum property to show
    \begin{align}
            \bm{\bar{\rmC}}(\Tcal||\Vcal) &= \bm{\rmD} (\Upsilon_{\rmR\rmA}^\Tcal||\Upsilon_{\rmR\rmA}^\Vcal)\\
            &\ge \bm{\rmD} (\Zcal(\Upsilon_{\rmR\rmA}^\Tcal)||\Zcal(\Upsilon_{\rmR\rmA}^\Vcal))\\
            &= \frac{1}{d^{2n-1}}\bm{\rmD} (\Tcal(\Scal)||\Vcal(\Scal))\\
            &+\qty(1-\frac{1}{d^{2n-1}})\bm{\rmD} (\eta_{\rmR\rmA}||\nu_{\rmR\rmA})\\
            &\ge \frac{1}{d^{2n-1}}\bm{\rmD} (\Tcal(\Scal)||\Vcal(\Scal)).
        \end{align}
    Since this holds for any $\Scal\in\Sbb_n$, we have
    \begin{align}
            \bm{\bar{\rmD}} (\Tcal||\Vcal) &= \sup_{\Scal\in\Sbb_n}\bm{\rmD}\qty(\Tcal(\Scal)||\Vcal(\Scal))\\
            &\le d^{2n-1} \bm{\bar{\rmC}}(\Tcal||\Vcal).
        \end{align}
\end{proof}

\section{Conclusions}

We discussed Choi divergences and generalized divergences as distinguishability measures for process tensors. Starting from the channel case, we exposed the gap in a claim that implies the monotonicity of Choi divergences under superchannels, and then we reproduced an explicit counterexample to it. Next, we showed what problems arise when this wrong claim is extended to process tensors. Namely, correlation quantifiers defined through Choi divergences do not satisfy contractivity under superprocesses, which implies, for example, that a widely used non-Markovianity quantifier may increase under the action of superprocesses that are local in time.

To circumvent these issues we turned to generalized divergences, which are known to satisfy important properties for quantum channels, including monotonicity under superchannels. We then analyzed its generalizations to quantum combs, namely, generalized comb divergences and classical generalized comb divergences, for which we proved the monotonicity under superprocesses. Finally, we generalized from channels to combs a relation between the two classes of divergences, allowing one to use Choi divergences to establish lower and upper bounds to generalized divergences.

A few questions naturally arise at this point. For example, how are the results of Refs. \cite{pollock2018operational,figueroa2019almost,Berk2021resourcetheoriesof,figueroa2021randomized,berk2023extracting,taranto2024characterising,zambon2024relations} impacted by our results, given that they use Choi divergences as non-Markovianity quantifiers? Also, if one defines a non-Markovianity quantifier as a generalized comb divergence to the closest Markov process, how does this relate to operational quantities, like the maximum amount of correlations present in the output of the dynamics under general combs? With such a definition, would it be possible to establish bounds like those of Ref. \cite{zambon2024relations}?

Besides these questions regarding general applications of distinguishability measures, the consequences of our results are clear in the context of resource theories whose states have a comb structure, like resource theories of process tensors and quantum strategies. For once, we could expect quantifiers defined using Choi divergences to, in general, violate monotonicity under free operations. On the other hand, quantifiers defined as generalized comb divergences to the closest free comb will certainly be monotonic under free operations of any comb resource theory. Therefore, we expect our results to clarify some misconceptions and provide a viable alternative approach to process tensor distinguishability measures, hopefully leading to a better understanding of important features of multitime quantum process, especially those relevant to quantum information-processing tasks.

\begin{acknowledgments}
I thank D. O. Soares-Pinto, G. Adesso, S. Milz, G. Berk, and K. Modi for insightful discussions. I acknowledge support from the S{\~a}o Paulo State Foundation (FAPESP) under Grants No. 2022/00993-9 and No. 2023/04625-7.
\end{acknowledgments}

\appendix

\section{Example 1}\label{app:ex1}

Consider a qubit channel $\Ncal_{\atb}$ with $\Id_\rmB$ as a fixed output, also called a completely depolarizing channel. The Choi state for this channel is
\begin{align}
    \Upsilon_{\rmA\rmB}^{\Ncal}&= (\Ical_{\rmA}\otimes\Ncal_{\atb})\Phi_{\rmA\rmA}\\
    &= \Id_\rmA\otimes\Id_\rmB.
\end{align}

Consider also a channel $\Mcal_{\atb}$ with Kraus operators $M_1=\sqrt{1/2}\ket{0}_\rmB\bra{0}_\rmA$, $M_2=\sqrt{1/2}\ket{1}_\rmB\bra{0}_\rmA$, and $M_3=\ket{1}_\rmB\bra{1}_\rmA$. Its Choi state is given by
\begin{align}
    \Upsilon_{\rmA\rmB}^{\Mcal}&= (\Ical_{\rmA}\otimes\Mcal_{\atb})\Phi_{\rmA\rmA}\\
    &= \frac{1}{2}\sum_{i,j=0}^1 \ket{i}\bra{j}_{\rmA}\otimes \sum_{k=1}^3 M_k\ket{i}\bra{j}_\rmA M_k^\dagger\\
    &= \frac{1}{4}\qty(\ket{00}\bra{00}_{\rmAB}+\ket{01}\bra{01}_{\rmAB}+2\ket{11}\bra{11}_{\rmAB}).\label{eq:eigen}
\end{align}

Now we calculate the relative entropy between them
\begin{align}
            S\qty(\Upsilon_{\rmA\rmB}^{\Mcal}||\Upsilon_{\rmA\rmB}^{\Ncal})&= S\qty(\Upsilon_{\rmA\rmB}^{\Mcal}||\Id_\rmA\otimes\Id_\rmB)\\
            &= 2 + S\qty(\Upsilon_{\rmA\rmB}^{\Mcal}||\Ibb_\rmA\otimes\Ibb_\rmB)\\
            &= 2 - H\qty(\Upsilon_{\rmA\rmB}^{\Mcal}),
\end{align}
in which we first used $S(\rho||a\sigma)=\log_2 a+S(\rho||\sigma)$ and $\log_2 2=1$ and then $S(\rho||\Ibb)=-H(\rho)$, where $H\qty(\rho)=-\tr[\rho\log_2 \rho]$ is the von Neumann entropy of $\rho$. Given we know from Eq. \eqref{eq:eigen} that $\Upsilon_{\rmA\rmB}^{\Mcal}$ has eigenvalues $\bm{\lambda}=\{1/4,1/4,1/2\}$, we have 
\begin{align}
    H\qty(\Upsilon_{\rmA\rmB}^{\Mcal})&=-\sum\lambda_i\log_2\lambda_i\\
    &= -\qty(\frac{1}{4}\log_2\frac{1}{4}+\frac{1}{4}\log_2\frac{1}{4}+\frac{1}{2}\log_2\frac{1}{2})\\
    &= \frac{3}{2},
\end{align}
implying
\begin{equation}
    S\qty(\Upsilon_{\rmA\rmB}^{\Mcal}||\Upsilon_{\rmA\rmB}^{\Ncal})=\frac{1}{2}.
\end{equation}

Now consider the superchannel $\Xi\qty(\Ecal)=\Ecal\circ\Rcal$ in which $\Rcal_{\rmA\to\rmA}$ is a channel with $\ket{1}\bra{1}_\rmA$ as a fixed output. The channel $\tilde{\Xi}$ induced by $\Xi$ over the set of Choi states is given by
\begin{equation}
        \tilde{\Xi}(\Upsilon_{\rmAB})= \Id_\rmA\otimes 2 \bra{1}_\rmA\Upsilon_{\rmAB}\ket{1}_\rmA,
\end{equation}
which we directly verify through Eq. \eqref{eq:choi-back},
\begin{align}
    \Xi[\Mcal_{\atb}^{\Upsilon}](\rho_{\rmA}) &= \Mcal_{\atb}^{\Upsilon}(\mathcal{R}(\rho_{\rmA}))\\
    &= \Mcal_{\atb}^{\Upsilon}(\ket{1}\bra{1}_\rmA)\\
    &=  d\tr_\rmA\qty[\ket{1}\bra{1}_\rmA\Upsilon_{\rmAB}^{T_\rmA}]\\
    &= 2\bra{1}_\rmA\Upsilon_{\rmAB}\ket{1}_\rmA\\
    &= d \tr_\rmA\qty[\rho_\rmA(\Id_\rmA\otimes 2\bra{1}_\rmA\Upsilon_{\rmAB}\ket{1}_\rmA)^{T_\rmA}]\\
    &= d\tr_\rmA\qty[\rho_\rmA\tilde{\Xi}(\Upsilon_{\rmAB})^{T_\rmA}].
\end{align} 

Notice that the channel $\Xi\qty(\Mcal)$ has $\Mcal_{\atb}(\ket{1}\bra{1}_\rmA)=\ket{1}\bra{1}_\rmB$ as a fixed output, while $\Xi\qty(\Ncal)$ has $\Ncal_{\atb}(\ket{1}\bra{1}_\rmA)=\Id_\rmB$ as a fixed output. This implies their Choi states are $\Upsilon_{\rmA\rmB}^{\Xi(\Mcal)}=\Id_\rmA\otimes\ket{1}\bra{1}_\rmB$ and $\Upsilon_{\rmA\rmB}^{\Xi(\Ncal)}=\Id_\rmA\otimes\Id_\rmB$. The relative entropy between them is
\begin{align}
            S\qty(\tilde{\Xi}\qty(\Upsilon_{\rmA\rmB}^{\Mcal})||\tilde{\Xi}\qty(\Upsilon_{\rmA\rmB}^{\Ncal}))&= S\qty(\Upsilon_{\rmA\rmB}^{\Xi(\Mcal)}||\Upsilon_{\rmA\rmB}^{\Xi(\Ncal)})\\
            &= S\qty(\Id_\rmA\otimes\ket{1}\bra{1}_\rmB||\Id_\rmA\otimes\Id_\rmB)\\
            &= S\qty(\Id_\rmA||\Id_\rmA)+S\qty(\ket{1}\bra{1}_\rmB||\Id_\rmB)\\
            &= 1;
\end{align}
therefore, $S\qty(\tilde{\Xi}\qty(\Upsilon_{\rmA\rmB}^{\Mcal})||\tilde{\Xi}\qty(\Upsilon_{\rmA\rmB}^{\Ncal})) > S\qty(\Upsilon_{\rmA\rmB}^{\Mcal}||\Upsilon_{\rmA\rmB}^{\Ncal})$.

\section{Example 2}\label{app:ex2}

For the process with Choi state $\Upsilon_{\rmAB\rmC\rmD}^{\Tcal}=\ket{0}\bra{0}_{\rmB}\otimes(\ket{0}\bra{0}_{\rmC}\otimes\Phi_{\rmA\rmD}+\ket{1}\bra{1}_{\rmC}\otimes\Id_{\rmA}\otimes\Id_{\rmD})/2$ we have
\begin{align}
    I(\Tcal)&=S(\Upsilon_{\rmAB\rmC\rmD}^{\Tcal}||\Upsilon_{\rmAB\rmC\rmD}^{\Tcal^{\text{marg}}})\\
        &= S(\Upsilon_{\rmA\rmC\rmD}^{\Tcal}||\Upsilon_{\rmA\rmC\rmD}^{\Tcal^{\text{marg}}})\\
        &= -H(\Upsilon_{\rmA\rmC\rmD}^{\Tcal})-\tr[\Upsilon_{\rmA\rmC\rmD}^{\Tcal}\log_2\Id_{\rmA\rmC\rmD}],
\end{align}
where we first used the fact that $\Upsilon_{\rmAB\rmC\rmD}^{\Tcal}=\ket{0}\bra{0}_{\rmB}\otimes\Upsilon_{\rmA\rmC\rmD}^{\Tcal}$ implies $S(\Upsilon_{\rmAB\rmC\rmD}^{\Tcal}||\Upsilon_{\rmAB\rmC\rmD}^{\Tcal^{\text{marg}}})=S(\Upsilon_{\rmA\rmC\rmD}^{\Tcal}||\Upsilon_{\rmA\rmC\rmD}^{\Tcal^{\text{marg}}})$ and then $S(\rho||\sigma)=-H(\rho)+\tr[\rho\log_2\sigma]$. Now, to calculate $H(\Upsilon_{\rmA\rmC\rmD}^{\Tcal})$ we use the fact that the entropy of a classical-quantum state is \cite{wilde2013quantum}
\begin{equation}
    H\qty(\sum_ip_i\ket{i}\bra{i}\otimes\rho_i) = -\sum_i p_i\log_2 p_i+\sum_ip_iH(\rho_i),
\end{equation}
which implies
\begin{align}
    H(\Upsilon_{\rmA\rmC\rmD}^{\Tcal})&= H\qty(\frac{1}{2}\ket{0}\bra{0}_{\rmC}\otimes\Phi_{\rmA\rmD}+\frac{1}{2}\ket{1}\bra{1}_{\rmC}\otimes\Id_{\rmA}\otimes\Id_{\rmD})\\
    &= 1 +\frac{1}{2}H(\Phi_{\rmA\rmD})+\frac{1}{2}H(\Id_{\rmA}\otimes\Id_{\rmD})\\
    &= 2
\end{align}
since $H(\Phi_{\rmA\rmD})=0$ for a pure state and $H(\Id_{\rmA}\otimes\Id_{\rmD})=2$. For the other term we have
\begin{align}
    \tr[\Upsilon_{\rmA\rmC\rmD}^{\Tcal}\log_2\Id_{\rmA\rmC\rmD}]&= \tr[\Upsilon_{\rmA\rmC\rmD}^{\Tcal}\qty(\log_2\Ibb_{\rmA\rmC\rmD}-3)]\\
    &= -3,
\end{align}
where we used $\log_2\Ibb_{\rmA\rmC\rmD}=0$ (null operator) and $\tr[\Upsilon_{\rmA\rmC\rmD}^{\Tcal}]=1$. Finally, we obtain
\begin{align}
    I(\Tcal)&= -H(\Upsilon_{\rmA\rmC\rmD}^{\Tcal})-\tr[\Upsilon_{\rmA\rmC\rmD}^{\Tcal}\log_2\Id_{\rmA\rmC\rmD}]\\
    &= 1.
\end{align}

\section{Example 3}\label{app:ex3}

The non-Markovianity quantifier is given by
\begin{align}
    N(\Tcal)&= \min_{\Vcal\in\Ccal^{\text{prod}}}S(\Upsilon_{\rmAB\rmC\rmD}^{\Tcal}|| \Upsilon_{\rmAB\rmC\rmD}^{\Vcal})\\
    &= S(\Upsilon^{\Tcal}|| \Upsilon_{\rmAB}^{\Tcal}\otimes\Upsilon_{\rmC\rmD}^{\Tcal})\\
    &=H(\Upsilon_{\rmAB}^{\Tcal})+H(\Upsilon_{\rmC\rmD}^{\Tcal})-H(\Upsilon_{\rmAB\rmC\rmD}^{\Tcal}),
\end{align}
where we first used the fact that the uncorrelated state closest to any state in relative entropy is the product of its marginals and then the definition of the quantum mutual information \cite{wilde2013quantum}. Since the Choi state is $\Upsilon_{\rmAB\rmC\rmD}^{\Tcal}=\ket{0}\bra{0}_{\rmB}\otimes(\ket{0}\bra{0}_{\rmC}\otimes\Phi_{\rmA\rmD}+\ket{1}\bra{1}_{\rmC}\otimes\Id_{\rmA}\otimes\Id_{\rmD})/2$, we have $\Upsilon_{\rmAB}^{\Tcal}=\Id_{\rmA}\otimes\ket{0}\bra{0}_{\rmB}$ and $\Upsilon_{\rmC\rmD}^{\Tcal}=\Id_{\rmC\rmD}$, implying
\begin{align}
    N(\Tcal)&= H(\Id_{\rmA}\otimes\ket{0}\bra{0}_{\rmB})+H(\Id_{\rmC\rmD})-H(\Upsilon_{\rmAB\rmC\rmD}^{\Tcal})\\
    &= 1+2-2\\
    &= 1,
\end{align}
where we used $H(\Upsilon_{\rmAB\rmC\rmD}^{\Tcal})=2$ from Appendix \ref{app:ex2}.

Given the action of the superprocess $\bm{Z}$, the Choi state is $\Upsilon_{\rmAB\rmC\rmD}^{\bm{Z}(\Tcal)}=\Phi_{\rmA\rmD}\otimes\Id_{\rmB\rmC}$. Then we have $\Upsilon_{\rmAB}^{\Tcal}=\Id_{\rmAB}$ and $\Upsilon_{\rmC\rmD}^{\Tcal}=\Id_{\rmC\rmD}$, implying
\begin{align}
    N(\bm{Z}(\Tcal))&= H(\Id_{\rmAB})+H(\Id_{\rmC\rmD})-H(\Phi_{\rmA\rmD}\otimes\Id_{\rmB\rmC})\\
    &= 2 + 2 - 2\\
    &= 2.
\end{align}


%

\end{document}